\theoremstyle{plain}
\newtheorem{theorem}{Theorem}[section]  %
\newtheorem{lemma}[theorem]{Lemma}
\newtheorem{proposition}[theorem]{Proposition}
\newtheorem{claim}[theorem]{Claim}
\theoremstyle{definition}  %
\newtheorem{definition}[theorem]{Definition}
\newenvironment{proofof}[1]{\begin{proof}[Proof of #1]}{\end{proof}}
\crefname{algocf}{Algorithm}{Algorithms}
\Crefname{algocf}{Algorithm}{Algorithms}
\crefname{claim}{Claim}{Claims}
\Crefname{claim}{Claim}{Claims}
\crefname{Distribution}{Distribution}{Distributions}
\Crefname{Distribution}{Distribution}{Distributions}
\crefname{Protocol}{Protocol}{Protocols}
\Crefname{Protocol}{Protocol}{Protocols}
\let\Fn\Function
\DeclarePairedDelimiter{\bk}{(}{)}
\DeclarePairedDelimiter{\Bk}{[}{]}
\DeclarePairedDelimiter{\BK}{\{}{\}}
\DeclarePairedDelimiter{\abs}{\lvert}{\rvert}
\DeclarePairedDelimiterX\mysetbase[2]{\lbrace}{\rbrace}{#1\,\delimsize\vert\,#2}
\NewDocumentCommand{\myset}{sO{}m m}{%
  \IfBooleanTF{#1}%
    {\mysetbase*{#3}{#4}}%
    {\mysetbase[#2]{#3}{#4}}%
}
\DeclareMathOperator*{\E}{\mathbb{E}}
\let\Pr\PrAux
\DeclareMathOperator*{\ind}{\mathbbm{1}}
\renewcommand{\tilde}{\widetilde}
\newcommand{\defeq}{\coloneqq}
\newcommand{\eps}{\varepsilon}
\newcommand{\N}{\mathbb{N}}
\newcommand{\R}{\mathbb{R}}
\renewcommand{\l}{\ell}
\renewcommand{\epsilon}{\eps}
\newcommand{\defn}[1]{\emph{\boldmath\textbf{#1}}}
\newcommand{\numberthis}{\addtocounter{equation}{1}\tag{\theequation}}
\xpatchcmd\thmt@restatable{%
\csname #2\@xa\endcsname\ifx\@nx#1\@nx\else[{#1}]\fi
}{%
\ifthmt@thisistheone
\csname #2\@xa\endcsname\ifx\@nx#1\@nx\else[{#1}]\fi
\else
\csname #2\@xa\endcsname[{Restated}]
\fi}{}{}
\newcommand{\accept}[1]{\overline{#1}}
\newcommand{\acceptvar}[1]{\tilde{#1}}
\newcommand{\hfilter}{H_{\textup{filter}}}
\newcommand{\calD}{\mathcal{D}}
\newcommand{\send}[1]{\texttt{Send(}#1\texttt{)}}
\newcommand{\tapeD}{\text{tape}(\calD)}
\title{Fingerprint Filters Are Optimal}
\author{
  William Kuszmaul\thanks{Partially supported by NSF grant CNS2504471. \texttt{kuszmaul@cmu.edu}.} \\
  CMU
  \and
  Jingxun Liang\thanks{\texttt{jingxunl@andrew.cmu.edu}.} \\
  CMU
  \and
  Renfei Zhou\thanks{Partially supported by the Jane Street Graduate Research Fellowship and the MongoDB PhD Fellowship. \texttt{renfeiz@andrew.cmu.edu}.} \\
  CMU
}
\date{}
\begin{document}
\date{}
\maketitle

\begin{abstract}
    Dynamic filters are data structures supporting approximate membership queries to a dynamic set $S$ of $n$ keys, allowing a small false-positive error rate $\epsilon$, under insertions and deletions to the set $S$. Essentially all known constructions for dynamic filters use a technique known as \emph{fingerprinting}. This technique, which was first introduced by Carter et al.~in 1978, inherently requires $$\log \binom{n \epsilon^{-1}}{n} = n \log \epsilon^{-1} + n \log e - o(n)$$ bits of space when $\epsilon = o(1)$. Whether or not this bound is optimal for \emph{all} dynamic filters (rather than just for fingerprint filters) has remained for decades as one of the central open questions in the area. We resolve this question by proving a sharp lower bound of $n \log \epsilon^{-1} + n \log e - o(n)$ bits for $\epsilon = o(1)$, regardless of operation time.
    
\end{abstract}

\section{Introduction}
\label{sec:introduction}

\defn{Filters} are space-efficient data structures designed for \emph{approximate set membership queries}, allowing a small false-positive error rate. Specifically, given a key set $S \subset U$ of at most $n$ keys, a filter with a false-positive error rate $\eps$ supports a \texttt{Query} operation for any $x \in U$ with the following guarantees:  
\begin{itemize}
    \item $\texttt{Query(}x\texttt{)}$ returns \texttt{true} for each $x \in S$.  
    \item $\texttt{Query(}x\texttt{)}$ returns \texttt{false} with probability at least $1 - \eps$ for each $x \notin S$.  
\end{itemize}  
In other words, queries for elements in $S$ are exact (always returning \texttt{true}), while queries for elements outside $S$ are approximate (most likely returning \texttt{false}, but sometimes producing false positives).

While similar to exact membership data structures like hash tables in functionality, filters require significantly less space, making them practical for various real-world applications (see, e.g., the surveys \cite{broder2004network, abdennebi2021bloom, tarkoma2012theory, patgiri2018role, luo2019optimizing, patgiri2018preventing, blustein2002bloom, nayak2019review, singh2020probabilistic}).

In this paper, we focus on \defn{dynamic filters}, which are filters that additionally support insertions and deletions to the key set $S$. In this setting, $n$ serves as the \emph{maximum capacity}, representing an upper bound on the number of elements in $S$ at any given time. 

The construction of dynamic filters in the literature has been dominated by a single paradigm: \defn{fingerprint filters}, introduced by \cite{carter1978exact}. In a fingerprint filter, each key is hashed to a random \emph{fingerprint} in the set $\{1, 2, \ldots, n \epsilon^{-1}\}$.\footnote{Slightly better bounds can be achieved by using a set of size very slightly less than $n \epsilon^{-1}$ for the fingerprints. However, so long as $\epsilon = o(1)$, this distinction only ends up affecting the final space usage by an additive $o(n)$ bits.} The data structure maintains a succinct hash table for all fingerprints and answers queries based on whether the fingerprint of the queried key is present in the hash table. This framework has been widely used in both theoretical dynamic filter works \cite{carter1978exact, pagh2005optimal, arbitman2010backyard, bender2022optimal,bender2018bloom,liu2020succinct,bercea2020dynamic} and in practical implementations such as quotient filters \cite{clerry1984compact, pagh2005optimal, bender2012dont, pandey2021vector, pandey2017generalpurpose, geil2018quotient} and cuckoo filters \cite{pagh2004cuckoo, fan2014cuckoo, chen2017dynamic, luo2019consistent}. The state of the art on the theory side is due to Bender et al. \cite{bender2022optimal}, who construct dynamic filters achieving $O(1)$-time operations using $n \log \epsilon^{-1} + n \log e + o(n)$ bits of space\footnote{The operator $\log$ denotes the binary logarithm in this paper.}, as long as $\log \epsilon^{-1} \in \omega(1) \cap O(\log n / (\log \log \cdots \log n))$ (for any constant number of logarithms). This result comes close to the theoretical limit for fingerprint filters, since fingerprint filters inherently require $\log \binom{n \eps^{-1}}{n} = n\log \eps^{-1} + n \log e - o(n)$ bits of space when $\eps = o(1)$.

What is not known is whether fingerprint filters are \emph{optimal}. Could a dynamic filter---implemented using a different paradigm---hope to break through the $n \log \epsilon^{-1} + n \log e - o(n)$-bit barrier? This simple question has stood as one of the central open questions in the area for nearly five decades \cite{carter1978exact}. 

It has been known since the late 1970s \cite{carter1978exact} that even static filters must use at least $n \log \epsilon^{-1}$ bits of space. The first separation between static and dynamic filters was given by Lovett and Porat \cite{lovett2013space}, who established a lower bound of the form $n \log \eps^{-1} + n f(\eps)$ for some positive function $f$.\footnote{Notably, their lower bound holds even for \emph{incremental filters} (i.e., filters that only support insertions). It was subsequently shown by \cite{kuszmaul2024space} that this type of lower bound is the best one can hope for in the incremental setting, as there exist incremental filters using $n \log \eps^{-1} + o(n)$ bits when $\eps = o(1)$.} For $\epsilon = o(1)$, Kuszmaul and Walzer \cite{kuszmaul2024space} established a stronger lower bound of 
$$n \log \eps^{-1} + \frac{n}{2} \log \frac{4}{\sqrt{17} - 1} - o(n) \approx n \log \epsilon^{-1} + 0.35n$$ bits. This leaves a gap of approximately $n \log e - n \cdot (\log \frac{4}{\sqrt{17} - 1})/2 \approx 1.1 n$ bits between the best known lower bound \cite{kuszmaul2024space} and the best bound achievable by fingerprint filters.

In this paper, we prove a sharp lower bound of $n \log \eps^{-1} + n \log e - o(n)$ bits for $\eps = o(1)$. The lower bound is information-theoretic, applying regardless of operation time.

\begin{restatable}{theorem}{FilterLB}
\label{thm:filter_lb}
Suppose $U$ is a set and $n$, $\eps$ are parameters such that $\eps = o(1)$ and $|U| = \omega(n \eps^{-1})$. Any algorithm that implements a dynamic filter with universe $U$, capacity $n$, and false-positive error rate $\eps$, and that can support a sequence of $\omega(n)$ insertions and deletions, must use space at least $n \log \eps^{-1} + n \log e - o(n)$ bits.
\end{restatable}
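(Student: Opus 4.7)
The plan is to prove the lower bound by a counting argument: each memory state of the filter can represent only relatively few distinct input sets, which forces the number of states (and hence the space) to be large. The key new ingredient, compared to the classical static argument giving $n \log \eps^{-1}$ bits, is to exploit the structure of deletions to extract an extra factor of roughly $e^n$ per state.

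First I would reduce to a deterministic, history-independent filter, so that the memory state $M = \phi(S)$ after any sequence of operations depends only on the current set $S$. For each state $M$ and each $y \in A(M)$ let $d(M,y)$ denote the state reached by \texttt{delete}$(y)$, and define the \emph{deletion class} $C_M(y) = \{y' \in A(M) : d(M,y') = d(M,y)\}$; these classes partition $A(M)$. For any $S$ with $\phi(S) = M$, the map $y \mapsto d(M,y)$ is forced to be injective on $S$, since history-independence demands $d(M,y_1) = \phi(S \setminus \{y_1\}) \ne \phi(S \setminus \{y_2\}) = d(M,y_2)$ whenever $y_1 \ne y_2$; hence $S$ contains exactly one element from each of $n$ distinct classes. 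The central combinatorial claim is that $|\phi^{-1}(M)| \le (|A(M)|/n)^n$. Granting this, the false-positive guarantee yields $\E\sqb*{|A(\phi(S))|} \le (1+o(1))\eps|U|$ for a uniformly random $n$-subset $S$, and Markov's inequality lets us restrict to states with $|A(M)| \le (1+o(1))\eps|U|$ while losing only an $o(1)$ fraction of the input mass; since the fibers $\phi^{-1}(M)$ partition $\binom{U}{n}$, summing gives
\[
\binom{|U|}{n} \;\le\; 2^s \cdot \bk*{\frac{\eps|U|}{n}}^n (1 + o(1)).
\]
A Stirling estimate $\binom{|U|}{n} = (|U|e/n)^n \cdot n^{-O(1)}$---valid since $|U| = \omega(n\eps^{-1})$ forces the correction $n^2/|U| = o(n)$---then rearranges to $2^s \ge (e/\eps)^n \cdot n^{-O(1)}$, i.e., $s \ge n\log\eps^{-1} + n\log e - o(n)$.

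The main obstacle is the combinatorial claim $|\phi^{-1}(M)| \le (|A(M)|/n)^n$. I would try to establish it in two stages: (a) \emph{within-class closure}, namely that swapping $y \in S$ for any $y' \in C_M(y)$ produces another $S' \in \phi^{-1}(M)$; and (b) that $M$ determines the set of hit classes, so that $\phi^{-1}(M)$ is a single within-class swap orbit of size $\prod_{c \in T}|c|$ for the fixed $n$-set $T$ of hit classes. Once (a) and (b) are in hand, AM-GM yields $\prod_{c \in T} |c| \le ((\sum_{c \in T}|c|)/n)^n \le (|A(M)|/n)^n$ and we are done. Both (a) and (b) are transparent for fingerprint filters, because swapping within a deletion class preserves the stored multiset of fingerprints and $M$ is literally that multiset; but for a general history-independent filter, neither is a priori clear, and extracting these symmetries is where I expect the bulk of the work to go. A natural strategy is to use the $\omega(n)$ subsequent operations guaranteed by the theorem to ``probe'' for any surviving distinction between within-class swap variants, arguing that any such distinction would force the filter into too many future states to fit within its memory budget. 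A secondary technical obstacle is the initial reduction to history-independence: while standard in the static regime, in the dynamic setting it must be carried out carefully so as not to inflate the space usage or degrade the false-positive rate.
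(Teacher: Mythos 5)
Your proposal takes a genuinely different route from the paper's (a counting/compression argument on fibers of the state map, versus the paper's one-way communication protocol in which Bob maintains one filter that shrinks by deletions and one that grows by insertions). However, the argument as written has a concrete gap at its first step, and two further steps are acknowledged but not carried out.

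\textbf{The injectivity claim is false.} You assert that for any $S$ with $\phi(S) = M$ and any distinct $y_1, y_2 \in S$, history independence forces $\phi(S\setminus\{y_1\}) \ne \phi(S\setminus\{y_2\})$. There is no such constraint: $\phi$ maps $\binom{|U|}{n-1}$ sets into at most $2^s$ states, so it is massively many-to-one, and nothing prevents these two particular sets from colliding. Indeed, this collision happens routinely in a fingerprint filter: if $y_1$ and $y_2$ share a fingerprint (which occurs for a non-negligible number of pairs when $S$ is random), then deleting either removes one copy of that fingerprint and lands in the same state, so $d(M,y_1) = d(M,y_2)$. There is also no downstream contradiction to extract: if $\phi(S\setminus\{y_1\}) = \phi(S\setminus\{y_2\}) = M'$, then inserting $y_1$ into $M'$ is legal when the true set is $S\setminus\{y_1\}$ and illegal when it is $S\setminus\{y_2\}$, and a filter carries no obligations on illegal operation sequences, so no inconsistency arises. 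Because the injectivity claim underlies the statement that $S$ hits $n$ distinct deletion classes, that statement fails too, and the class-based decomposition on which everything else is built is not available.

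\textbf{The central claim is the entire problem.} Even setting the injectivity issue aside, the inequality $|\phi^{-1}(M)| \le (|A(M)|/n)^n$ is, in effect, the whole theorem in disguise: it is exactly the assertion that the fiber structure of an arbitrary space-efficient filter cannot be more favorable than that of a fingerprint filter, and nothing in the filter axioms obviously enforces the product structure posited in your sub-claims (a) and (b). The paper does not prove any statement of this form; it instead constructs an explicit adaptive encoding (\cref{alg:history_independent_protocol}) whose message entropy is controlled term-by-term, and the extra $n\log e$ emerges from a Karamata-inequality argument (\cref{lem:choice_of_s}) about where to switch between delete-from-$F$ and insert-into-$G$ phases. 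Your plan to verify (a) and (b) by ``probing'' with the $\omega(n)$ operations is a sketch, not an argument, and it is precisely where a proof would have to be supplied.

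\textbf{The reduction to history independence is not available.} You treat this as a ``secondary technical obstacle,'' but there is no known space-preserving reduction from general dynamic filters to history-independent ones, and the paper does not attempt one: \cref{sec:history_dependent,sec:non-monotone} work directly with history-dependent, non-monotone filters via obfuscation sequences and reconstructible sets, which occupies roughly half the paper. A proof that silently assumes history independence is proving \cref{prop:warmup}, not \cref{thm:filter_lb}.
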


We remark that this theorem is not just of theoretical interest. Practical filters often use an error rate $\eps$ of $1/256$ \cite{lee2021telescoping}, which means $\log \eps^{-1} = 8$. Under these conditions, the $n \log e - o(n) \approx 1.44 n$ term in \cref{thm:filter_lb} accounts for $15.2\%$ of the overall space usage.\footnote{In practice, it is difficult to use a fractional number of bits per key, so real-world implementations tend to incur an overhead of at least $2n$ bits, which accounts for approximately 20\% of the overall space.} Our result demonstrates that this type of overhead is fundamentally unavoidable in both theory and practice.

The proof of \cref{thm:filter_lb} is presented in three parts. We begin in \cref{sec:warmup} by proving a special case of \cref{thm:filter_lb} under two simplifying assumptions: history independence and monotonicity. This special case serves to illustrate our proof framework, in which we construct a one-way communication protocol that is able to extract $n \log \epsilon^{-1} + n \log e - o(n)$ bits from a dynamic filter. Then, in \cref{sec:history_dependent,sec:non-monotone}, we develop techniques to remove these assumptions, first handling history independence and then monotonicity, respectively. The final results apply to any dynamic filter.

\section{Preliminaries}
\label{sec:preliminaries}

In this section, we formally define the problem that a dynamic filter must solve. Since \cref{thm:filter_lb} focuses on space complexity and does not consider running time, we adopt a simple computational model: the data structure uses a fixed-length memory that can be accessed arbitrarily. For randomized data structures, we additionally assume access to an infinite-length read-only tape of independent random bits, which does not contribute to the space usage.\footnote{Some models assume oracle access to a random number generator that outputs independent random bits on demand. This can be simulated using our random tape model by maintaining a pointer to track the last revealed random bit, allowing us to reveal new random bits as needed. The pointer only occupies $o(n)$ bits of space when there are at most $2^{o(n)}$ calls to the random number generator in total, which is negligible compared to the main space usage of the filter.}

Let $n \in \N$, $\eps \in (0,1)$, and $U = \BK{1,2,\ldots, u}$ be parameters representing the maximum capacity, the false-positive error rate, and the universe, respectively, for the filter. A dynamic filter with respect to these parameters is defined as follows.

\begin{definition}[Dynamic filters]
    A dynamic filter is a randomized data structure $\calD$ that maintains approximate membership for a set $S \subset U$ of at most $n$ elements, supporting the following operations:
    \begin{itemize}
        \item $\texttt{Initialize()}$: Creates and returns an initial empty filter state.
        \item $\texttt{Query(}x\texttt{)}$: For any key $x \in S$, returns $\texttt{true}$ deterministically. For any key $x \notin S$, returns $\texttt{false}$ with probability at least $1 - \eps$, where the probability is taken over the randomness of $\calD$ (the key $x$ is not assumed to be random). 
        \item $\texttt{Insert(}x\texttt{)}$: Adds $x$ to $S$ (i.e., updates $S$ to $S \cup \BK{x}$). This operation assumes $x \notin S$ before insertion.
        \item $\texttt{Delete(}x\texttt{)}$: Removes $x$ from $S$ (i.e., updates $S$ to $S \setminus \BK{x}$). This operation assumes $x \in S$ before deletion.
    \end{itemize}
\end{definition}

We emphasize that, while insertions and deletions update the set $S$, the filter must perform these operations using only the information stored in $\calD$, the input key $x$, and the random tape---the filter does not have explicit knowledge of the current set $S$.

\section{Warmup: Lower Bound Under Two Assumptions}
\label{sec:warmup}

In the remainder of this paper, we will prove \cref{thm:filter_lb}. 

\FilterLB*

Fix a filter algorithm $\calD$ that uses $\hfilter$ bits of space. All of the filters in the remainder of this paper will be implemented using $\calD$ with the same random tape. The goal of the proof is to show that $\hfilter \ge n \log \eps^{-1} + n \log e - o(n)$. 

In this section, as a warmup, we first prove the statement for special $\calD$ with the following two assumptions: 
\begin{itemize}
    \item \textbf{History independence:} The memory state of a filter $F$ is uniquely determined by its true key set, along with the random tape used in $\calD$. (Therefore, so is the accepted set $\accept{F}$ of the filter $F$.) Equivalently, any two operational histories that result in the same true key set must also produce the same filter state. 
    \item \textbf{Monotonicity (for history-independent filters):} For any two filters $F$ and $G$ with true key sets $S$ and $T$, respectively, if $T \subseteq S$, then their accepted key sets satisfy $\accept{G} \subseteq \accept{F}$.
\end{itemize}

Formally, in this section, we prove the following proposition:
\begin{proposition}
\label{prop:warmup}
    \cref{thm:filter_lb} holds for filters that satisfy both history independence and monotonicity.
\end{proposition}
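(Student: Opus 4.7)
The plan is a compression argument packaged as a one-way communication protocol. Alice will hold a random input $X$ with entropy close to $\log\binom{n\eps^{-1}}{n} = n\log\eps^{-1} + n\log e - o(n)$, encode $X$ into a filter state $F$ (at most $\hfilter$ bits), and transmit $F$ together with a short $o(n)$-bit addendum to Bob. Bob, sharing the random tape $R$ of $\calD$, decodes $X$ from the message. Since $H(X)$ is at most the total message length, this yields the claimed lower bound on $\hfilter$.

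For the source, I would fix an injection $h\colon [N] \to U$ with $N = (1 - o(1)) n\eps^{-1}$ via shared randomness, and draw $X$ uniformly from $\binom{[N]}{n}$. By Stirling, $H(X) = \log\binom{N}{n} = n\log\eps^{-1} + n\log e - o(n)$. Alice's encoder runs a fresh copy of $\calD$ on the shared tape $R$, inserts the keys $h(X)$ one-by-one, and transmits the resulting filter state $F$; by history independence, $F$ depends only on $X$ and $R$. To decode, Bob queries $\calD$ on each element of $h([N])$ to compute $A = h^{-1}(\accept{F}) \subseteq [N]$, which necessarily contains $X$, and then must pick $X$ out of the candidate subsets in $\binom{A}{n}$.

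The entire argument reduces to showing that for typical $R$, the number of alternative $X' \in \binom{[N]}{n}$ with $F(h(X')) = F(h(X))$ is $2^{o(n)}$ in expectation; the addendum Alice sends then has expected length $o(n)$, so $\hfilter + o(n) \ge H(X)$. A naive accounting using only the per-key false-positive rate $\eps$ together with the trivial bound $|F^{-1}(F)| \le \binom{|\accept{F}|}{n}$ gives collision clusters of size about $e^n$, costing exactly the $n\log e$ we are trying to save.

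The hard part is sharpening this collision bound from $e^n$ down to $2^{o(n)}$, and this is where monotonicity is essential. If $X \neq X'$ but $F(h(X)) = F(h(X'))$, history independence forces $\accept{F(h(X))} = \accept{F(h(X'))}$, so all of $h(X \setminus X')$ must be false positives of $F(h(X'))$. Monotonicity then propagates the constraint along the chains $X \cap X' \subseteq X, X'$: the accepted set $\accept{F(h(X \cap X'))}$ is sandwiched inside $\accept{F(h(X))} \cap \accept{F(h(X'))}$, so any phantom $X'$ is forced into a very rigid shape, reminiscent of the fiber structure of an idealized fingerprint filter. The plan is to formalize this via a second-moment argument over the random choice of $h$, bounding the expected number of collision pairs by $\binom{N}{n}^2 \cdot 2^{o(n)}$ and concluding that most $X$ are uniquely decodable from $F$. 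This collision analysis is the technical heart of the warmup and makes joint essential use of both simplifying assumptions.
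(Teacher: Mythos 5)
Your proposal has a fundamental gap that makes it unworkable. You send a single filter state $F$ (at most $\hfilter$ bits) together with an addendum of expected length $L$, so by an elementary entropy count $\hfilter + L \ge H(X)$, and in fact $L \ge H(X \mid F, R) \ge H(X) - \hfilter$. Thus $L = o(n)$ is \emph{equivalent} to $\hfilter \ge H(X) - o(n)$: any filter with $\hfilter < H(X) - \Omega(n)$ necessarily has collision fibers of average size $2^{\Omega(n)}$, regardless of history independence or monotonicity. Attempting to prove that the fibers are $2^{o(n)}$ as a route to lower-bounding $\hfilter$ is therefore circular, and the second-moment argument over $h$ you sketch cannot succeed. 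Indeed, you note that the naive accounting gives fibers of size roughly $e^n$; that is not slack to be sharpened away---it \emph{is} precisely the $n\log e$ bits of entropy that your one-shot message is missing.

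The structural issue is that your encoder is static: Alice inserts $h(X)$, transmits one state, and Bob's only leverage is $X \subseteq \accept{F}\cap h([N])$. Any such single-state protocol tops out at the classic $n\log\eps^{-1}$ bound. The additional $n\log e$ bits require exploiting that the filter is \emph{dynamic}: the paper's protocol has Bob interactively delete already-received keys from $F$ and insert them into a second filter $G$, so that each subsequent key is sent as an element of the ever-shrinking set $\accept{F}\setminus\accept{G}$. History independence is used to identify $F$ and $G$ once their true key sets coincide, monotonicity is used to guarantee $\accept{G}\subseteq\accept{F}$, and the real technical content (\cref{lem:choice_of_s}) is choosing the switchover point $s$ between the insert-into-$G$ and delete-from-$F$ phases so that the accumulated universe reduction recovers $n\log e - o(n)$ bits no matter how the accepted-set sizes are distributed. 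Your observation that $\accept{F(h(X\cap X'))}$ is sandwiched inside $\accept{F(h(X))}\cap\accept{F(h(X'))}$ is true under the assumptions, but it never leaves the static setting and does not by itself yield a communication saving.
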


\subsection{Proof Intuition for Proposition~\ref{prop:warmup}}
To build intuition, we begin by describing a standard argument that one can use to obtain a weaker lower bound, namely that $\hfilter \ge n\log \eps^{-1} - o(n)$. Assume to the contrary that there exists a filter algorithm $\calD$ with $\hfilter < n \log \eps^{-1} - \Omega(n)$. Such a filter would enable an unrealistically efficient one-way communication protocol for Alice to send a sequence of distinct keys $x_1, \ldots, x_n$ (where each $ x_i \in U$) to Bob.  

On the one hand, from an information-theoretic perspective, Alice must send at least $\log (|U|^{\underline{n}}) \approx n \log |U|$ bits\footnote{The notation $m^{\underline{n}}$ represents the falling factorial of $m$ of order $n$, i.e., $m(m-1)\cdots(m-n+1)$.}, which corresponds to roughly $\log |U|$ bits per key.

On the other hand, if Alice first sends a filter $F$ with a true key set $\BK{x_1, \ldots, x_n}$ and then sends the sequence $(x_1, \ldots, x_n)$ as keys from the accepted key set $\accept{F}$ (which is of size roughly $\eps|U|$), she only needs to send approximately $\log (\eps|U|)$ bits per key---saving $\log \eps^{-1}$ bits per key---along with a filter $F$ of at most $\hfilter < n \log \eps^{-1} - \Omega(n)$ bits. This means that Alice sends a total of $n \log (\eps |U|) + n \log \eps^{-1} - \Omega(n) = n \log |U| - \Omega(n) = \log |U|^{\underline{n}} - \Omega(n)$ bits to indirectly transmit the sequence $(x_1, \ldots, x_n)$ using the filter algorithm $\calD$, which leads to a contradiction, proving the bound.  

Next we explain, at a high level, how to extend this approach to obtain our desired lower bound of $\hfilter \ge n \log \eps^{-1} + n \log e + o(n)$. 
The key observation is that, in the previous protocol of sending $(x_1,\ldots,x_n)$ indirectly, Alice can achieve an even greater reduction in communication cost by using a \emph{dynamic} filter. For instance, after Alice sends $x_1$ using $\log \abs{\accept{F}} \approx \log (\eps |U|)$ bits, Bob can delete $x_1$ from $F$ to obtain $F'$. Alice then only needs to send the remaining keys as elements from $\accept{F'}$. Since $F'$ has a smaller true key set than $F$ and we assume $\calD$ is monotonic, the universe of the remaining keys is reduced by replacing $\accept{F}$ by $\accept{F'}$, leading to an additional reduction in communication cost for Alice. Similarly, after Alice sends $x_2$, Bob can delete $x_2$ from $F'$, further reducing the universe for the remaining keys. More generally, when Alice sends $x_k$, Bob has already deleted $x_1, \ldots, x_{k-1}$ from $F$, resulting in a filter with only ${n-k+1}$ keys in its true key set. 

Ideally, if we assume the size of the accepted set of a filter scales linearly with the size of the true key set---meaning that a filter with a true key set of size $(n-k+1)$ has roughly $\frac{n-k+1}{n}\eps |U|$ accepted keys---then Alice saves an additionally $\log \frac{n}{n-k+1}$ bits when sending $x_k$, which is
\[\sum_{k=1}^{n} \log \frac{n}{n-k+1} = \log \frac{n^n}{n!} \approx n \log e\]
bits in total, improving the previous lower bound to the desired $\hfilter \ge n \log \eps^{-1} + n \log e + o(n)$.

However, in a general filter, the accepted set size may not scale linearly with the true key set size. In particular, the accepted set size might always be very close to $\eps |U|$, meaning that we barely save any communication cost by deleting keys from $F$. In this case, Alice can reduce the message size following another strategy: After Alice sends $x_1$, Bob can build another filter $G$ with true key set $\BK{x_1}$. Then, as all the remaining keys are not in the true key set of $G$, they are unlikely to be in $\accept{G}$, and Alice can send them as elements in $\accept{F} \setminus \accept{G}$ with probability at least $1 - \eps$. This means that in the extreme case where the accepted set size is always almost $\eps|U|$, the universe of all the remaining keys is significantly reduced by replacing $\accept{F}$ with $\accept{F}\setminus \accept{G}$ and so is the communication cost. Similarly, after Alice sends $x_2$, Bob can insert $x_2$ to $G$ and gain further universe reduction.

The filter lower bound in this section cleverly combines the previous two ideas of reducing the universe. Initially, Bob has two filters in his hand, one with the true key set $\BK{x_1, \ldots, x_n}$, called $F$, and the other with an empty true key set, called $G$. 
During the entire protocol, all the keys that are not sent are in the true key set of $F$, but not in the true key set of $G$.
Whenever Alice sends a key $x$, she sends $x$ conditioned on the two filters in Bob's hand (i.e., viewing $x$ as an element of $\accept{F} \setminus \accept{G}$). Then, Bob either deletes $x$ from $F$, or inserts $x$ to $G$. For the convenience of notation, we slightly adjust the order in which Alice sends the keys: In each round, she sends either the smallest unsent key (when Bob is going to insert that key to $G$) or the largest unsent key (when Bob is going to delete that key from $F$) to guarantee that the true key sets of both $F$ and $G$ are prefixes of $\BK{x_1, \ldots,x_n}$. 

In fact, as we will see, the final protocol for Alice actually takes the following simple form: Alice first sends the smallest keys for some number of rounds and then switches to sending the largest keys in the remaining rounds. The main technical challenge in the full proof is to show that, no matter how the accepted set size scales with the true key set size, there always exists such a protocol in which Alice can reduce her overall communication cost by an additional $n\log e - o(n)$ bits when compared to the standard protocol.

\subsection{Formal Proof of Proposition~\ref{prop:warmup}}
Let $(x_1, \ldots, x_n)$ be a sequence of distinct keys from $U$ which is sampled uniformly at random. Consider a one-way communication game where Alice wants to transmit $(x_1, \ldots, x_n)$ to Bob. Suppose both Alice and Bob have free access to the random tape of $\calD$. Note that, no matter how Alice sends her message, she must communicate at least $\log |U|^{\underline{n}}$ bits of entropy.
\begin{claim}
\label{clm:entropy_message_lb_independent}
    In any one-way communication protocol where Alice transmits $(x_1, \ldots, x_n)$ to Bob, the entropy of the message sent by Alice is at least $\log |U|^{\underline{n}}$.
\end{claim}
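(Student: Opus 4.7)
The plan is to prove this standard information-theoretic fact by a straightforward entropy argument. Let $X = (x_1, \ldots, x_n)$ denote Alice's random input, $M$ her (possibly randomized) message, and $R$ the contents of the shared random tape of $\calD$, to which both Alice and Bob have access. The key observation is that since the protocol is correct, Bob must recover $X$ from $(M, R)$, so there exists a deterministic decoding function $f$ with $X = f(M, R)$.

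The steps would be: First, since $X$ is sampled uniformly from the $|U|^{\underline{n}}$ sequences of distinct keys, we have $H(X) = \log |U|^{\underline{n}}$. Second, since the shared random tape $R$ is sampled independently of Alice's input $X$, we have $H(X \mid R) = H(X) = \log |U|^{\underline{n}}$. Third, from $X = f(M, R)$ we get $H(X \mid M, R) = 0$, which via the chain rule gives
\[
H(X \mid R) \;=\; H(X \mid R) - H(X \mid M, R) \;=\; I(X ; M \mid R) \;\le\; H(M \mid R) \;\le\; H(M).
\]
Chaining these yields $H(M) \ge \log |U|^{\underline{n}}$, as desired.

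I do not expect any substantive obstacle here — this is essentially the source-coding lower bound applied with shared randomness. The only small thing to be careful about is that, although Alice may use private randomness in producing $M$, the bound $H(M) \ge H(X)$ still holds because $H(\cdot)$ is taken over the full joint distribution (including private coins) and the decoding equality $X = f(M, R)$ holds for every realization of all random sources. If one instead wanted a worst-case bound on message length rather than expected length, one could argue identically after conditioning on each fixing of $R$ and using $\log|\text{support}(M \mid R)| \ge H(X \mid R)$.
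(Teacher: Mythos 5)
Your proof is correct and matches the paper's argument: both derive the bound from $H(M) \ge H(X \mid R) = H(X) = \log |U|^{\underline{n}}$, using that Bob decodes $X$ from $(M,R)$ and that $R$ is independent of $X$. You simply spell out the intermediate steps (the chain rule and $I(X;M\mid R)\le H(M)$) that the paper leaves implicit.
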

\begin{proof}
    As Bob can decode $(x_1, \ldots, x_n)$ from the message, the entropy of the message is at least 
    \[H\bk{x_1, \ldots, x_n \mid \text{random tape of $\calD$}} = H\bk{x_1, \ldots, x_n} = \log |U|^{\underline{n}}. \qedhere \] 
\end{proof}

We now construct a protocol where Alice transmits $(x_1, \ldots, x_n)$ indirectly using a filter algorithm. Let $s \in [0, n]$ be an integer parameter to be determined, which will depend only on the filter algorithm $\calD$ but not on the keys $(x_1, \ldots, x_n)$. The protocol $P_s$ (which is parameterized by $s$) is defined in \cref{alg:history_independent_protocol}.
We will also explain \cref{alg:history_independent_protocol} in detail below.

\begin{algorithm}[htbp]
  \caption{Protocol $P_s$}
  \label{alg:history_independent_protocol}
  \SetKwRepeat{Do}{do}{while}
    \SetKwFunction{Send}{Send} 
    \SetKwProg{Fn}{Subroutine}{:}{}
  
  \DontPrintSemicolon
    \Fn(\tcp*[f]{$x$ is in $F$'s true key set, but not in $G$'s.}){\Send{$x, F, G$}}{
        $Z \longleftarrow$ indicator for whether $x \in \accept{G}$\;
        Alice sends $Z$ to Bob\;
        \uIf {$Z = 0$ (i.e., $x \notin \accept{G}$)} {
            Alice sends $x$ as an element in $\accept{F} \setminus \accept{G}$\;
        } \Else{
            Alice sends $x$ as an element in $U$\;
        }  
    }
    $F \longleftarrow$ filter with true key set $\BK{x_1, \ldots, x_n}$\;
    $G \longleftarrow$ filter with true key set $\varnothing$\;
    Alice prepares $F$ and sends it to Bob\;
    Bob prepares $G$ by himself\;
    \For{$k$ from $1$ to $s$} { 
        Alice sends $x_k$ using \Send{$x_k, F, G$}\;
        Bob inserts $x_k$ to $G$\;
    }
    \For{$k$ from $n$ to $s + 1$} {
        Alice sends $x_k$ using \Send{$x_k, F, G$}\;
        Bob deletes $x_k$ from $F$\;
    }
\end{algorithm}

\paragraph{Initialization.}
At the beginning of the protocol $P_s$, Alice first sends a filter $F$ with the true key set $\BK{x_1, \ldots, x_n}$ to Bob, and Bob prepares a filter $G$ with an empty true key set. This step takes $\hfilter$ bits of communication.

\paragraph{Sending a single key.}
In the main part of the protocol, Alice sends each key $x$ to Bob using the protocol $\send{x, F, G}$, where $F$ and $G$ are two filters held by Bob such that $x$ is in the true key set of $F$ but not in the true key set of $G$.

In the protocol $\send{x, F, G}$, Alice first sends a bit $Z$ to indicate whether $x$ is in the accepted set of $G$. As $x$ is not in the true key set of $G$, $Z$ is a highly biased bit, with probability at most $\eps$ that $Z$ equals $1$. Hence, the entropy of $Z$ is small: 
\begin{align*}
    H(Z) \le h(\eps) \defeq -\eps\log \eps - (1-\eps) \log (1 - \eps),
\end{align*}
where $h$ denotes the binary entropy function.

Then, depending on the different values of $Z$, Alice sends $x$ differently. In the common case where $x \notin \accept{G}$, $x$ must lie in $\accept{F} \setminus \accept{G}$ (since $x$ is in the true key set of $F$, it is also in the accepted set $\accept{F}$). Alice can then send $x$ as an element of $\accept{F} \setminus \accept{G}$ using only $\log \abs{\accept{F} \setminus \accept{G}}$ bits. In the rare case that $x \in \accept{{G}}$, Alice sends $x$ directly as an element in $U$ using $\log |U|$ bits. The message length for this step can be written as 
\begin{align*}
    Z \log |U| + (1-Z) \log \abs{\accept{F} \setminus \accept{G}},
\end{align*}
where both $Z$ and $\abs{\accept{F} \setminus \accept{G}}$ are random variables.

After Alice sends the key $x$ using $\send{x, F, G}$, Bob will either insert $x$ to $G$ or delete $x$ from $F$ to reduce the communication cost for future keys.

\paragraph{Order of key sendings.} The protocol $P_s$ has two stages. In the first stage, Alice always sends the smallest unsent key in each round (i.e., in the order of $x_1, x_2, \ldots$), and after receiving it, Bob inserts the key into $G$. Once Alice has sent $x_1, \ldots, x_s$, they switch to the second stage where Alice instead sends the largest unsent key in each round (i.e., in the order of $x_n, x_{n-1}, \ldots$) and Bob deletes the key from $F$. 

Throughout the protocol, the true key sets of both $F$ and $G$ are prefixes of $\BK{x_1, \ldots, x_n}$. Specifically, for any $k \le n$, let $F_k$ denote the filter with the true key set $\BK{x_1, \ldots , x_k}$. Since we assume $\calD$ is history independent, the true key set uniquely determines the filter. Then, when Alice sends $x_k$, the two filters $F$ and $G$ held by Bob can be represented as $F_{r_k}$ and $F_{\l_k}$ with $\l_k < k \le r_k$, where 
\begin{align*}
    \l_k \defeq \begin{cases}
        k - 1 & \textup{if } k \le s\\
        s & \textup{if } s+ 1 \le k \le n
    \end{cases} \quad  \text{and} \quad
    r_k \defeq \begin{cases}
        n & \textup{if }k \le s \\
        k& \textup{if } s+1 \le k \le n
    \end{cases}.
\end{align*}
Plugging this into the previous calculation, when Alice sends $x_k$ using $\texttt{Send}(x_k, F_{r_k}, F_{\l_k})$, she first sends an indicator $Z_k \defeq \ind\Bk{x_k \in \accept{F_{\l_k}}}$, and then sends $Z_k \log |U| + (1-Z_k) \log \abs{\accept{F_{r_k}} \setminus \accept{F_{\l_k}}}$ bits. The total entropy of the message for sending $x_k$ is at most
\begin{align*}
    h(\eps) + \E\Bk*{Z_k \log |U| + (1-Z_k) \log \abs*{\accept{F_{r_k}} \setminus \accept{F_{\l_k}}}},
\end{align*}
where the expectation is taken over both the randomness of $\calD$ and the randomness of the keys $\bk{x_1, \ldots, x_n}$.

\paragraph{Upper bounding the total message entropy.}
Summing up the message entropy for sending the filter $F_n$ and the message entropy for sending each $x_k$, the total message entropy of $P_s$ is at most
\begin{align}
    \hfilter + n h(\eps) + \sum_{k=1}^n \E\Bk*{Z_k \log |U| + (1-Z_k) \log \abs*{\accept{F_{r_k}} \setminus \accept{F_{\l_k}}}}. \label{eq:entropy_message_independent}
\end{align}
Below, we further simplify \eqref{eq:entropy_message_independent}.

Define $a_k$ for $k = 0,1, \ldots, n$ as 
\[a_k \defeq \frac{1}{(1 - \eps) n + \eps|U|} \E\Bk*{\abs{\accept{F_k}} - \abs{\accept{F_{k-1}}}}, \]
where we adopt the convention that $|\accept{F_{-1}}| = 0$. Since we assume $\calD$ is monotonic, the accepted sets satisfy $\accept{F_0} \subseteq \accept{F_1} \subseteq \cdots \subseteq \accept{F_n}$, so each $a_k$ is non-negative. 
Note that $a_k$ is not a random variable and depends only on the algorithm $\calD$ since we already take expectation over all the randomness (including the randomness of $\calD$ and the randomness of the keys $\bk{x_1, \ldots, x_n}$) in the definition of $a_k$. Below, we write \eqref{eq:entropy_message_independent} in terms of $a_k$'s.

First, by the definition of $a_k$, the expected size of the accepted sets can be written as
\begin{align*}
    \E\Bk*{\abs{\accept{F_k}}} = \bk*{(1 - \eps) n + \eps |U|} \sum_{i=0}^k a_i = \bk*{(1 - \eps) n + \eps |U|} \cdot a_{[0,k]}.
\end{align*}
Here, for clarity of notation, we introduce two abbreviations for partial sums of $a_k$: For any $\l \le r$, we use $a_{[\l,r]}$ to denote $\sum_{i=\l}^r a_{i}$ and we use $a_{(\l, r]}$ to denote $\sum_{i = \l+1}^r a_i$.
Moreover, since a key outside the true key set of a filter falls into the accepted set with probability at most $\eps$, we obtain $\E\Bk*{\abs{\accept{F_n}}} \le n + \eps \bk{|U| - n} = (1-\eps) n + \eps |U|$, which means $a_{[0,n]} = a_0 + \cdots + a_n \le 1$.

Then, each term in the summation of \eqref{eq:entropy_message_independent} can be upper bounded in terms of $a_k$'s by the following claim:
\begin{claim}
    \label{clm:entropy_single_key_independent}
    For any $k \le n$,
    \begin{align*}
        \E\Bk*{Z_k \log |U| + (1-Z_k) \log \abs*{\accept{F_{r_k}} \setminus \accept{F_{\l_k}}}}
        {}\le{}  \log |U| + (1 - \eps) \log a_{(\l_k, r_k]} + \log \eps + o(1).
    \end{align*}
\end{claim}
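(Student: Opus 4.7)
My plan is to split the left-hand side according to the value of $Z_k \in \cb{0,1}$ and control the two branches separately. Writing $p \defeq \Pr\sqb*{Z_k = 1}$ and $Y \defeq \abs*{\accept{F_{r_k}} \setminus \accept{F_{\l_k}}}$, the false-positive guarantee applied to $x_k \notin F_{\l_k}$'s true key set yields $p \le \eps$, and since $1 - Z_k = \ind\sqb{Z_k = 0}$ the left-hand side equals
\[ p \log |U| + (1 - p) \cdot \E\sqb*{\log Y \mid Z_k = 0}. \]
I will control the conditional expectation by two relaxations: Jensen's inequality for the concave function $\log$ gives $\E\sqb*{\log Y \mid Z_k = 0} \le \log \E\sqb*{Y \mid Z_k = 0}$, and the bound $\E[Y] \ge (1-p) \cdot \E\sqb*{Y \mid Z_k = 0}$ (from $Y \ge 0$ and the law of total expectation) gives $\E\sqb*{Y \mid Z_k = 0} \le \E[Y]/(1-p)$.

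Next I will compute $\E[Y]$ via a telescoping sum. By the definition of $a_i$, we have $\E[Y] = \E\bk*{\abs{\accept{F_{r_k}}}} - \E\bk*{\abs{\accept{F_{\l_k}}}} = \bk*{(1-\eps) n + \eps|U|} \cdot a_{(\l_k, r_k]}$. The hypothesis $|U| = \omega(n \eps^{-1})$ simplifies $(1-\eps) n + \eps|U|$ to $(1 + o(1)) \eps|U|$, giving $\log \E[Y] = \log |U| + \log \eps + \log a_{(\l_k, r_k]} + o(1)$. Substituting back, the $p \log |U|$ and $(1 - p) \log |U|$ pieces combine into a single $\log |U|$, leaving an expression of the form $\log |U| + \log \eps + (1 - p) \log a_{(\l_k, r_k]} + p \log \eps^{-1} + (1-p) \log \frac{1}{1-p} + o(1)$. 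Since $p \le \eps = o(1)$, both $p \log \eps^{-1} \le \eps \log \eps^{-1}$ and $(1-p) \log \frac{1}{1-p}$ are $o(1)$. To finish, the monotonicity assumption gives every $a_i \ge 0$, and since $\E\bk*{\abs{\accept{F_n}}} \le (1-\eps) n + \eps|U|$ we have $a_{[0,n]} \le 1$; hence $\log a_{(\l_k, r_k]} \le 0$, so weakening the coefficient $(1 - p)$ to $(1 - \eps)$ only enlarges the right-hand side, yielding the claim.

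\textbf{Main obstacle.} The delicate point is the $Z_k = 1$ branch. The naive estimate $\E[Z_k \log|U|] \le \eps \log|U|$ leaks an unbounded amount of entropy since $|U|$ is only lower-bounded as $\omega(n \eps^{-1})$ and may be arbitrarily large. The fix is to avoid bounding $p \log |U|$ in isolation: coupling it with the Jensen step on the $Z_k = 0$ branch through the factor $1/(1-p)$ in $\E\sqb*{Y \mid Z_k = 0} \le \E[Y]/(1-p)$ causes $p \log |U|$ to merge with $(1-p) \log |U|$ into exactly $\log |U|$, so that only ``small'' residual terms involving $\log \eps^{-1}$ or $\log \frac{1}{1-p}$ remain, each of which vanishes as $\eps \to 0$.
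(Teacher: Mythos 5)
Your proof is correct and follows essentially the same route as the paper's: condition on $Z_k$, apply Jensen's inequality to the $\log$, bound the conditional expectation by $\E[Y]/\Pr[Z_k=0]$, and use the telescoping identity (via monotonicity) to express $\E[Y]$ in terms of $a_{(\l_k,r_k]}$. The one cosmetic difference is bookkeeping around $p=\Pr[Z_k=1]$: the paper first observes that the bound is increasing in $p$ (since the conditional expectation is at most $|U|$) and substitutes $p\le\eps$ throughout before simplifying, while you carry $p$ to the end and discard the residual $p$-dependent terms using $p\le\eps$ together with $\log a_{(\l_k,r_k]}\le 0$; both are valid and both perform the same merging of $p\log|U|$ with $(1-p)\log|U|$ that your ``main obstacle'' paragraph describes.
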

\begin{proof}
    Recall that $Z_k \defeq \ind\Bk{x_k \in \accept{F_{\l_k}}}$ and $p \defeq \Pr\Bk{Z_k = 1} \le \eps$. By the principle of total probability, 
    \begin{align*}
        &\E\Bk*{Z_k \log |U| + (1-Z_k) \log \abs*{\accept{F_{r_k}} \setminus \accept{F_{\l_k}}}}\\
        {}={}& p\log |U| + (1 - p)\E\Bk*{\log \bk*{\abs*{\accept{F_{r_k}}} - \abs{\accept{F_{\l_k}}}} \mid Z_k = 0}\\
        {}\le{}& p\log |U| + (1 - p)\log \bk*{\E\Bk*{\abs*{\accept{F_{r_k}}} - \abs{\accept{F_{\l_k}}}\mid Z_k = 0}}, \numberthis \label{ineq:entropy_single_key_independent}
    \end{align*}
    where the second inequality is because the function $\log$ is concave.
    Moreover, 
    \begin{align*}
        &\E\Bk*{\abs*{\accept{F_{r_k}}} - \abs{\accept{F_{\l_k}}}\mid Z_k = 0}\\
        {}={}& \sum_{i=0}^{\infty} i \cdot \Pr\Bk*{\abs*{\accept{F_{r_k}}} - \abs*{\accept{F_{\l_k}}} = i \mid Z_k = 0}\\
        {}\le{}& \sum_{i=0}^{\infty} i \cdot \frac{\Pr\Bk*{\abs*{\accept{F_{r_k}}} - \abs*{\accept{F_{\l_k}}} = i}}{\Pr\Bk*{Z_k = 0}}\\
        {}={}&  \frac{\E\Bk*{\abs*{\accept{F_{r_k}}} - \abs*{\accept{F_{\l_k}}}}}{\Pr\Bk*{Z_k = 0}}
        {}\le{}  \frac{{(1 - \eps)n + \eps |U|} }{1- \eps} \cdot a_{(\l_k, r_k]}, \numberthis \label{ineq:conditional_accepted_set_size_independent}
    \end{align*}
    and it is clear that $\E\Bk*{\abs*{\accept{F_{r_k}}} - \abs{\accept{F_{\l_k}}}\mid Z_k = 0} \le |U|$, which implies that \eqref{ineq:entropy_single_key_independent} is increasing in $p$. Therefore, by plugging \eqref{ineq:conditional_accepted_set_size_independent} and $p \le \eps$ into \eqref{ineq:entropy_single_key_independent}, we obtain the desired bound
    \begin{align*}
        &\E\Bk*{Z_k \log |U| + (1-Z_k) \log \abs*{\accept{F_{r_k}} \setminus \accept{F_{\l_k}}}}\\
        {}\le{}& p\log |U| + (1 - p)\log \bk*{\E\Bk*{\abs*{\accept{F_{r_k}}} - \abs{\accept{F_{\l_k}}}\mid Z_k = 0}}\\
        {}\le{}& \eps\log |U| + (1 - \eps)\log \bk*{ \frac{{(1 - \eps)n + \eps |U|} }{1- \eps} \cdot a_{(\l_k, r_k]}}\\
        {}\le{}& \eps\log |U| + (1 - \eps)\log \bk*{ \eps |U| \cdot a_{(\l_k, r_k]}} + o(1)\\
        {}\le{}& \log |U| + (1 - \eps)\log a_{(\l_k, r_k]} + \log \eps + o(1),
    \end{align*}
    where the third and fourth inequalities are because $\eps = o(1)$ and $|U| = \omega(n \eps^{-1})$, which imply $\log (1 - \eps) = o(1)$, $\log\bk*{\frac{(1-\eps)n + \eps |U|}{\eps |U|}} = o(1)$, and $\eps\log \eps = o(1)$.
\end{proof}

Now, plugging \cref{clm:entropy_single_key_independent} into \eqref{eq:entropy_message_independent}, the total message entropy of $P_s$ is at most
\begin{align*}
    &\hfilter + nh(\eps) + \sum_{k=1}^n \bk*{\log |U| + (1 - \eps) \log a_{(\l_k, r_k]} + \log \eps + o(1)}\\
    {}\le{}&\hfilter + n\log |U| + n \log \eps + (1 - \eps)\sum_{k=1}^n  \log a_{(\l_k, r_k]} + o(n)\\
    {}={}&\hfilter + n\log |U| + n \log \eps + (1 - \eps)\bk*{\sum_{k=1}^s  \log a_{[k, n]} + \sum_{k=s+1}^n \log a_{(s,k]}} + o(n), \label{ineq:entropy_message_in_ak_independent} \numberthis\footnotemark
\end{align*}
\footnotetext{When $s = 0$, the sum $\sum_{k=1}^s \log a_{[k,n]}$ is empty and vanishes; when $s = n$, the sum $\sum_{k = s+1}^n \log a_{(s, k]}$ is empty and vanishes.}
where we use $h(\eps) = o(1)$ in the first inequality.

\paragraph{Choice of \texorpdfstring{$s$}{s}.} Now, comparing \eqref{ineq:entropy_message_in_ak_independent} with the message entropy lower bound in \cref{clm:entropy_message_lb_independent}, we obtain
\begin{align*}
    \hfilter 
    {}\ge{} & \log |U|^{\underline{n}} - n\log |U| - n\log \eps - (1-\eps)\bk*{\sum_{k=1}^s  \log a_{[k, n]} + \sum_{k=s+1}^n \log a_{(s,k]}} - o(n)\\
    {}\ge{} & n\log \eps^{-1} - (1-\eps)\bk*{\sum_{k=1}^s  \log a_{[k, n]} + \sum_{k=s+1}^n \log a_{(s,k]}} - o(n),\label{ineq:filter_size_lb_independent} \numberthis
\end{align*}
where we use $\log |U|^{\underline{n}} - n \log |U| \ge n \log \bk*{1 - \frac{n}{|U|}} = -o(n)$ as $|U| = \omega(n)$. 

Thus, to establish the desired lower bound for $\hfilter$, it suffices to show that there exists a choice of $s$ for which 
$${\sum_{k=1}^s  \log a_{[k, n]} + \sum_{k=s+1}^n \log a_{(s,k]}} \le - n \log e + o(n),$$
or equivalently that
$${\sum_{k=1}^s  \log \frac{1}{a_{[k, n]}} + \sum_{k=s+1}^n \log \frac{1}{a_{(s,k]}}} \ge n \log e - o(n).$$
The next lemma establishes that such an $s$ always exists, and is the main technical lemma of the section. 

\begin{lemma}
    \label{lem:choice_of_s}
    Let $a_1, \ldots, a_n \ge 0$ be real numbers such that $a_1 + \cdots + a_n \le 1$. Then, there exists an $s$ with $0 \le s \le n$, such that 
    \begin{align*}
        \sum_{k=1}^s  \log a_{[k, n]} + \sum_{k=s+1}^n \log a_{(s,k]} \le -n\log e + o(n). \label{ineq:choice_of_s} \numberthis
    \end{align*}
\end{lemma}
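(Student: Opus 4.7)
The plan is to produce an explicit choice of $s = s^*$ depending on the sequence $a_1, \ldots, a_n$ and bound $F(s) \defeq \sum_{k=1}^s \log a_{[k,n]} + \sum_{k=s+1}^n \log a_{(s,k]}$ via a combination of convexity inequalities and Stirling-type asymptotics; note that the constant $n\log e$ on the right-hand side of \eqref{ineq:choice_of_s} is, up to lower-order terms, exactly $-\log(n!/n^n)$, so Stirling's formula must play a central role.

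First I would reduce to the non-degenerate case in which $a_k > 0$ for every $k$. If $a_k = 0$ for some $k \in \BK{1,\ldots,n}$, then choosing $s = k-1$ forces $a_{(s,s+1]} = a_k = 0$ and therefore $F(s) = -\infty$, making the inequality trivial. Henceforth assume all $a_k > 0$, write $b_k \defeq a_{[k,n]}$ and $T_k \defeq a_1 + \cdots + a_k$, and rewrite
\begin{equation*}
F(s) = \log\bk*{\prod_{k=1}^s b_k \cdot \prod_{k=s+1}^n \bk{T_k - T_s}}.
\end{equation*}
As a sanity check, in the uniform case $a_k \equiv T/n$, these two products combine to exactly $T^n \cdot n!/n^n$ for \emph{every} $s$, giving $F(s) = n\log T - n\log e + O(\log n)$ by Stirling; since $T \le 1$, this already matches the target and identifies the uniform distribution as the critical extremal instance.

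For general $a_k$, I would pick $s^*$ by a threshold condition on $b_{s+1}$---a natural candidate is the smallest $s$ with $b_{s+1} \le 1/e$. With this choice, every term of the ``deletion'' sum satisfies $\log(b_{s^*+1} - b_{k+1}) \le \log b_{s^*+1} \le -\log e$, so $B_{s^*} \defeq \sum_{k=s^*+1}^n \log(b_{s^*+1} - b_{k+1}) \le -(n-s^*)\log e$, which already gives the ``deletion'' share of the desired bound exactly. The delicate step is showing that the ``insertion'' sum satisfies $A_{s^*} \defeq \sum_{k=1}^{s^*} \log b_k \le -s^* \log e + o(n)$; this needs to exploit the monotonicity of the $b_k$'s, which traverse from $b_1 \le 1$ down through $1/e$ over the $s^*$ indices. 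By a Stirling-type comparison with a uniformly-decreasing sequence on this interval, one should obtain the required bound.

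The hard part will be this $A_{s^*}$ bound: the crude estimate $b_k \le 1$ only gives $A_{s^*} \le 0$, while a direct AM-GM gives $A_{s^*} \le s^* \log \overline{b}$ for the arithmetic mean $\overline{b}$ of the $b_k$'s, which loses a factor of $\log e$ compared to what is needed. Recovering the tight bound requires exploiting the specific structure that the $b_k$'s are not merely $\le 1$ but monotonically sweep through an interval of length comparable to $1 - 1/e$. If the single-threshold choice of $s^*$ is not tight enough for certain distributions (e.g., ones where $b_k$ is close to $1$ for nearly all $k \le s^*$, indicating concentrated mass in the tail), the fallback is to consider multiple dyadic thresholds $s_j \defeq \min\{s : b_{s+1} \le e^{-j}\}$ for $j = 0, 1, \ldots, O(\log n)$ and argue via telescoping across these scales---using the $o(n)$ slack in the target to absorb the $O(\log n)$ candidates---that at least one of them yields the desired bound.
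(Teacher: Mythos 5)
Your setup is correct (the reduction to $a_k>0$, the reformulation in terms of $b_k = a_{[k,n]}$, and the identification of the uniform sequence as the extremal case), but the central step of the proposal fails. The single-threshold choice $s^* = \min\{s : b_{s+1}\le 1/e\}$ does not satisfy the claimed bound $A_{s^*}\le -s^*\log e + o(n)$, and indeed $F(s^*)$ itself can sit far above $-n\log e$. Concretely, take $a_1=\cdots=a_{n/2}=\delta$ and $a_{n/2+1}=\cdots=a_n=(2-c)/n$ with $\delta = c/n$ for a small constant $c$ (so all $a_k>0$ and they sum to $1$). For $c$ small, $b_k\approx 1$ for $k\le n/2$ and $b_k$ decreases roughly linearly to $0$ on the second half, so $s^*\approx n(1-\tfrac{1}{2e})\approx 0.82n$. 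A short integral estimate gives $A_{s^*}\approx -0.13\,n\log e$ (far from $-s^*\log e\approx -0.82\,n\log e$) and $B_{s^*}\approx -0.37\,n\log e$, so $F(s^*)\approx -0.5\,n\log e$, which is not $\le -n\log e + o(n)$. (The lemma is still true for this sequence---$s=0$ works because $T_k$ is of order $\delta k$ on the first half---but your chosen $s^*$ does not.) The dyadic-threshold ``fallback'' is only a sketch and as stated there is no argument that one of the $O(\log n)$ candidates ever achieves the bound, so the proof has a genuine gap precisely at the step you flag as ``the hard part.''

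The paper's proof takes a structurally different route that avoids this difficulty. Rather than selecting $s$ by a threshold on $b_{s+1}$, it first strengthens the target to the sharp, Stirling-free form $\sum_{k\le s}\log a_{[k,n]}+\sum_{k>s}\log a_{(s,k]}\le \log(n!/n^n)$ (normalizing to $a_{[1,n]}=1$), which is the right form for an induction. It then either outputs $s=0$, or else defines a breakpoint $j$ as the first index at which the prefix sums of $\log a_{(0,k]}$ cross the prefix sums of $\log(k/n)$; this majorization structure is exactly what lets one invoke Karamata's inequality with the concave increasing $f(x)=\log(1-2^{-x})$ to control the sum $\sum_{k\le j}\log a_{(k,n]}$, after which the inductive hypothesis is applied to the normalized suffix $(a_{j+1},\ldots,a_n)$ to pick $s\ge j$. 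The key idea you are missing is this combination: a majorization-defined breakpoint (rather than a value threshold on $b$), Karamata for the prefix, and recursion on the suffix. None of these three components appear in your proposal, and your single-threshold argument would need to be replaced wholesale.
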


Using \cref{lem:choice_of_s}, we can choose a suitable $s$ such that \eqref{ineq:filter_size_lb_independent} becomes
\begin{align*}
    \hfilter 
    {}\ge{}  n\log \eps^{-1} + (1-\eps)n\log e - o(n)
    {}\ge{}  n\log \eps^{-1} + n\log e - o(n).
\end{align*}
Furthermore, the protocol $P_s$ requires only $O(n)$ insertions and deletions to maintain the filters $F$ and $G$. Therefore, assuming \cref{lem:choice_of_s}, we have completed the proof of \cref{prop:warmup}.

\subsection{Proof of Lemma \ref{lem:choice_of_s}}
In this subsection, we prove \cref{lem:choice_of_s}. First, by the monotonicity of the logarithm function, we may assume without loss of generality that $a_1 + \cdots + a_n = 1$. Moreover, instead of directly finding an $s$ that satisfies \eqref{ineq:choice_of_s}, we will inductively establish the existence of an $s$ that meets a more precise constraint:
\begin{align*}
        \sum_{k=1}^s  \log a_{[k, n]} + \sum_{k=s+1}^n \log a_{(s,k]} \le \log \frac{n!}{n^n}. \label{ineq:choice_of_s_stronger}\numberthis
\end{align*}
By Stirling's approximation, \eqref{ineq:choice_of_s_stronger} implies \eqref{ineq:choice_of_s}, since $\log \frac{n!}{n^n} \le \log \frac{(n/e)^n}{n^n} + O(\log n) = -n\log e + o(n)$.

\begin{proofof}{\cref{lem:choice_of_s}}
    We use induction on $n$ to prove the existence of $s$ satisfying \eqref{ineq:choice_of_s_stronger}. 

    For the base case $n = 1$, we can choose either $s = 0$ or $1$, and \eqref{ineq:choice_of_s_stronger} simplifies to $\log a_1 \le 0$, which follows from the condition $a_1 = 1$. For the rest of the proof, consider some $n > 1$, and suppose that the statement holds for all integers smaller than $n$.
    
    First, we check whether $s = 0$ already satisfies \eqref{ineq:choice_of_s_stronger}. If
    \begin{align*}
      \sum_{k=1}^n \log a_{(0,k]} \le \log \frac{n!}{n^n}, \label{ineq:choice_of_s_s=0} \numberthis
    \end{align*}
    then we can directly take $s = 0$ to conclude the statement. Below, we assume \eqref{ineq:choice_of_s_s=0} does not hold.
    
    Now, let $j$ be the smallest index such that 
    \begin{align*}
      \sum_{k=1}^j \log a_{(0,k]} > \sum_{k = 1}^j \log \frac{k}{n}. \label{ineq:definition_of_j} \numberthis
    \end{align*}
    The existence of such $j$ is ensured by the assumption that \eqref{ineq:choice_of_s_s=0} does not hold, implying that $j = n$ is always a valid candidate. By the definition of $j$, we have
    \begin{align*}
        \sum_{k=1}^i \log a_{(0,k]} \le \sum_{k=1}^i \log \frac{k}{n}, \quad \forall i < j. \label{ineq:property_of_j_majorization} \numberthis
    \end{align*}
    In particular, by comparing \eqref{ineq:definition_of_j} and \eqref{ineq:property_of_j_majorization} (taking $i = j-1$), we get that $\log a_{(0, j]} > \log (j/n)$, which implies that 
    \begin{equation}a_1 + \cdots + a_j = a_{(0, j]} > j/n
    \label{eq:a0j}
    \end{equation}

    We will choose $s$ from the range $j \le s \le n$ by using the inductive hypothesis
    on a normalized suffix 
    \begin{align*}
        \bk*{\frac{a_{j+1}}{a_{j+1} + \cdots + a_n}, \frac{a_{j+2}}{a_{j+1} + \cdots + a_n}, \ldots, \frac{a_{n}}{a_{j+1} + \cdots + a_n}}.
    \end{align*}
    By induction hypothesis, there exists $s$ with $j \le s \le n$, such that 
    \begin{align*}
        \sum_{k = j+1}^s \log a_{[k,n]} + \sum_{k=s+1}^n \log a_{(s,k]} \le \log \frac{(n-j)!}{(n-j)^{n-j}} + (n - j) \log a_{(j,n]}.\label{ineq:using_induction_hypothesis} \numberthis
    \end{align*}
    Below, we show that such an $s$ also satisfies \eqref{ineq:choice_of_s_stronger}.

    Using \eqref{ineq:using_induction_hypothesis}, the left-hand side of \eqref{ineq:choice_of_s_stronger} can be upper bounded by
    \begin{align*}
        \sum_{k=1}^s  \log a_{[k, n]} + \sum_{k=s+1}^n \log a_{(s,k]}
        {}\le{}& \sum_{k=1}^j  \log a_{[k, n]} + \log \frac{(n-j)!}{(n-j)^{n-j}} + (n - j) \log a_{(j,n]}\\
        {}={}& \sum_{k=2}^{j+1}  \log a_{[k, n]} + \log \frac{(n-j)!}{(n-j)^{n-j}} + (n - j - 1) \log a_{(j,n]}, \label{ineq:after_plugging_in_hypothesis} \numberthis
    \end{align*}
    where in the second line we use $\log a_{[1,n]} = 0$ and $\log a_{[j+1, n]} = \log a_{(j, n]}$. The main step in the rest of the proof will be to establish an upper bound on $ \sum_{k=2}^{j+1}  \log a_{[k, n]}$ using \eqref{ineq:definition_of_j} and \eqref{ineq:property_of_j_majorization}.

    Let $f: \R_{+} \to \R$ denote the function $f(x) = \log \bk*{1 - 2^{-x}}$. Then, one can observe that
    \begin{align*}
        \sum_{k=2}^{j+1}  \log a_{[k, n]}
        = \sum_{k=1}^{j}  \log a_{(k, n]}
        = \sum_{k=1}^{j}  \log \bk*{1 - a_{(0,k]}}
        = \sum_{k=1}^{j}  f\bk*{-\log a_{(0,k]}}. \label{ineq:sum_of_convex_function} \numberthis
    \end{align*}
    The right-hand side of \eqref{ineq:sum_of_convex_function} can be upper bounded using a variation of Karamata's inequality \cite{karamata1932inegalite}:
    \begin{theorem}[Karamata's inequality \cite{karamata1932inegalite}] 
        Let $x_1 \ge x_2 \ge\cdots \ge x_n$ and $y_1 \ge y_2 \ge \cdots \ge y_n$ be two sequences of real numbers, satisfying
        \begin{itemize}
            \item $x_1 + \cdots + x_i \ge y_1 + \cdots + y_i$, for all $i < n$, and
            \item $x_1 + \cdots + x_n \le y_1 + \cdots + y_n$.
        \end{itemize}
        Then, for any concave and increasing function $f: \R \to \R$, we have 
        \begin{align*}
            f(x_1) + \cdots + f(x_n) \le f(y_1) + \cdots + f(y_n).
        \end{align*}
    \end{theorem}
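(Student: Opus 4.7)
The theorem is a mild variant of the classical Karamata inequality, differing from the textbook statement in two respects: the final-sum condition is $X_n \le Y_n$ rather than $X_n = Y_n$ (writing $X_i$ and $Y_i$ for the partial sums of $x$ and $y$), and $f$ is concave and increasing rather than convex. My plan is to prove both relaxations in a single pass using the standard Abel-summation proof of Karamata, with the monotonicity of $f$ exactly neutralizing the slack introduced by the sum-inequality relaxation.

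Concretely, I would set $g = -f$, which is convex and decreasing, and aim to show $\sum_i g(x_i) \ge \sum_i g(y_i)$. For each $i$, pick $c_i$ in the subdifferential $\partial g(y_i)$ — concretely the right-derivative of $g$ at $y_i$. Two sign/monotonicity facts fall out of the hypotheses on $g$: the $c_i$ are nonincreasing (because $g$ is convex and the $y_i$ are nonincreasing), and $c_i \le 0$ (because $g$ is decreasing). Convexity gives the one-sided linear lower bound $g(x_i) \ge g(y_i) + c_i(x_i - y_i)$, so summing over $i$ yields
\begin{equation*}
    \sum_{i=1}^n g(x_i) - \sum_{i=1}^n g(y_i) \;\ge\; \sum_{i=1}^n c_i(x_i - y_i).
\end{equation*}

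Applying Abel summation to the right-hand side rewrites it as $\sum_{i=1}^{n-1}(c_i - c_{i+1})(X_i - Y_i) + c_n(X_n - Y_n)$, and every term is nonnegative: for $i < n$, $c_i - c_{i+1} \ge 0$ by monotonicity of the $c_i$ and $X_i - Y_i \ge 0$ by the first hypothesis; for the boundary term, $c_n \le 0$ and $X_n - Y_n \le 0$, so $c_n(X_n - Y_n) \ge 0$. This is precisely where the monotonicity of $f$ pays for the relaxation of the equal-sum condition — in the textbook case the boundary term simply vanishes, while here it is nonnegative for sign reasons. Combining gives $\sum_i g(x_i) \ge \sum_i g(y_i)$, i.e.\ $\sum_i f(x_i) \le \sum_i f(y_i)$.

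The only subtlety, and the one step I would write out with care, is the choice of $c_i$ when $g$ fails to be differentiable at some $y_i$, or when consecutive $y_i$'s coincide: taking the right-derivative of the convex function $g$ everywhere (or any consistent selection from the subdifferential) preserves the required chain $c_1 \ge \cdots \ge c_n$ and the sign condition $c_n \le 0$. As an alternative presentation, one can first reduce to the equal-sum case by replacing $y_n$ with $y_n' = y_n - (Y_n - X_n) \le y_{n-1}$, invoking classical Karamata on $(x_i)$ and $(y_i')$, and then using monotonicity of $f$ to bound $\sum_i f(y_i') \le \sum_i f(y_i)$; this is conceptually cleaner but hides the real content of the proof, which is the Abel-summation argument above.
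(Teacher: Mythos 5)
The paper does not prove this statement; it cites Karamata's original 1932 paper and a survey and uses the inequality as a known black box, so there is no in-paper proof to compare against. Your argument is a correct, self-contained proof. It is the standard Abel-summation proof of Karamata, carried out for $g = -f$ convex and decreasing, with the one extra observation needed for this variant made explicit: since the hypothesis only gives $X_n \le Y_n$ rather than equality, the boundary term $c_n(X_n - Y_n)$ in the Abel rearrangement does not vanish, but the sign condition $c_n \le 0$ coming from $g$ decreasing (equivalently, $f$ increasing) makes it nonnegative, so the conclusion still goes through. The bookkeeping checks out: choosing $c_i$ as the right-derivative of the convex $g$ at $y_i$ gives a nonincreasing sequence even when $y_i$'s repeat, $c_i - c_{i+1} \ge 0$ pairs with $X_i - Y_i \ge 0$ for $i<n$, and $D_0 = 0$ kills the lower boundary term. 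Your alternative reduction -- replace $y_n$ by $y_n' = y_n - (Y_n - X_n) \le y_{n-1}$ to restore the equal-sum case, apply classical Karamata, then use monotonicity of $f$ to pass from $y_n'$ back to $y_n$ -- is also correct and is perhaps the cleanest way to present the statement as a corollary of the textbook version.
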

    For an overview of Karamata's inequality, and other related inequalities, see also \cite{kadelburg2005inequalities}.
    
    We will use Karamata's inequality over the function $f$ and two sequences $\bk*{-\log a_{(0,1]}, -\log a_{(0,2]}, \ldots, -\log a_{(0,j]}}$ and $\bk*{-\log (1/n), -\log (2/n), \ldots, -\log (j/n)}$. Clearly, both sequences are non-increasing. Moreover, the premise of Karamata's inequality is satisfied by these two sequences according to \eqref{ineq:definition_of_j} and \eqref{ineq:property_of_j_majorization}. We can also check $f$ is concave and increasing as follows.
    \begin{claim}
        The function $f: \R_+ \to \R$ defined as $f(x) = \log \bk*{1 - 2^{-x}}$ is concave and increasing.
    \end{claim}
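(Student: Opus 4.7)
The plan is a short calculus verification: compute $f'$ in closed form, observe that it is strictly positive (yielding strict monotonicity), and observe that it is strictly decreasing (yielding strict concavity). No clever trick is needed.

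First I would differentiate, converting the base-$2$ logarithm to the natural one via $\log(\cdot) = \ln(\cdot)/\ln 2$ and applying the chain rule. After the $\ln 2$ factors cancel, this yields the clean expression
\[
  f'(x) \;=\; \frac{2^{-x}}{1 - 2^{-x}} \;=\; \frac{1}{2^x - 1},
\]
which is manifestly positive on $\R_+$, so $f$ is strictly increasing. For concavity, rather than grinding out $f''$, I would simply note that $2^x - 1$ is strictly increasing on $\R_+$, so its reciprocal $f'(x)$ is strictly decreasing---which is equivalent to strict concavity of $f$. (Alternatively, one could differentiate once more to obtain $f''(x) = -(2^x \ln 2)/(2^x - 1)^2 < 0$, reaching the same conclusion.)

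The only minor subtlety is that $f$ has a singularity at $x = 0$, but the claim is stated on $\R_+ = (0, \infty)$, so this causes no issue. There is no substantive obstacle here; the entire argument fits in a few lines.
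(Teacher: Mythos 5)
Your proof is correct and takes essentially the same approach as the paper: compute $f'(x) = 2^{-x}/(1-2^{-x})$ and verify positivity and monotonicity of $f'$. The paper checks $f'' < 0$ directly, while you note that $f'(x) = 1/(2^x - 1)$ is visibly decreasing, a minor stylistic shortcut to the same conclusion.
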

    \begin{proof}
        $f'(x) = \frac{2^{-x}}{1 - 2^{-x}} > 0 $ for all $x > 0$, and $f''(x) = \frac{-2^{x} \ln 2}{(2^x - 1)^2} < 0$ for all $x > 0$.
    \end{proof}

    Hence, we can use Karamata's inequality over the function $f$ and the two sequences to obtain
    \begin{align*}
        \sum_{k=1}^j f\bk*{-\log a_{(0,k]}} 
        \le \sum_{k=1}^j f\bk*{-\log \frac{k}{n}} 
        = \sum_{k=1}^j \log \frac{n-k}{n}
        = \sum_{k=n-j}^{n-1} \log \frac{k}{n}. \label{ineq:applying_karamata} \numberthis
    \end{align*}
    Finally, by plugging \eqref{ineq:sum_of_convex_function} and \eqref{ineq:applying_karamata} into \eqref{ineq:after_plugging_in_hypothesis}, we obtain
    \begin{align*}
        & \sum_{k=1}^s  \log a_{[k, n]} + \sum_{k=s+1}^n \log a_{(s,k]}\\
        {}\le{}& \sum_{k=2}^{j+1}  \log a_{[k, n]} + \log \frac{(n-j)!}{(n-j)^{n-j}} + (n - j - 1) \log a_{(j,n]} \tag{by \eqref{ineq:after_plugging_in_hypothesis}}\\
        {}\le{}& \sum_{k=n-j}^{n-1} \log \frac{k}{n} + \log \frac{(n-j)!}{(n-j)^{n-j}} + (n - j - 1) \log a_{(j,n]} \tag{by \eqref{ineq:sum_of_convex_function} and \eqref{ineq:applying_karamata}}\\
        {}\le{}& \sum_{k=n-j}^{n-1} \log \frac{k}{n} + \log \frac{(n-j)!}{(n-j)^{n-j}} + (n - j - 1) \log \frac{n-j}{n},
        \end{align*}
        where in the final step uses that $a_{(j,n]} = 1 - a_{(0,j]} \le 1 - j/n$ by \eqref{eq:a0j}. Continuing, the bound simplifies to
        \begin{align*}
        {}{}& \sum_{k=n-j}^{n-1} \log \frac{k}{n} + \sum_{k=1}^{n-j-1}\bk*{\log \frac{k}{n-j} + \log \frac{n-j}{n}}
        {}={} \sum_{k=1}^{n} \log \frac{k}{n} = \log \frac{n!}{n^n},
    \end{align*}
    as desired. This completes the induction step and proves \cref{lem:choice_of_s}.
\end{proofof}

\section{Beyond History Independence}
\label{sec:history_dependent}

In this section, we extend the lower bound in \cref{sec:warmup} to remove the assumption of history independence. For a history-dependent filter, the memory state is not purely determined by the true key set and the random seeds, but also depends on the history of operations used to obtain this filter.

Note that we still assume the data structure $\calD$ is monotone, with a generalized definition of monotonicity defined as follows (the previous monotonicity in \cref{sec:warmup} is for history-independent filters):

\begin{definition}[Self-contained operational sequence]
    We say an operational sequence $\sigma$ consisting of a series of insertions and deletions is \defn{self-contained} if it satisfies the following constraints:
    \begin{itemize}
        \item (No deletions of non-elements.) Each deleted key in $\sigma$ must have been inserted earlier \emph{in the same sequence} $\sigma$ and has not been deleted in between.
        \item (No duplicate insertions.) For any key that appears multiple times in $\sigma$, between any two insertions of that key, there must be a deletion of that key.
    \end{itemize}
    Equivalently, a self-contained operational sequence is a valid sequence of operations that can be performed on an empty filter.
\end{definition}

Clearly, the true key set of a filter is non-decreasing after we apply a self-contained operational sequence over the filter.

\begin{definition}[Monotonicity for general filters]
    We say a filter algorithm $\calD$ is \defn{monotone} if the following is true: For any $0 < j < k$, and for any sequence $\sigma = (\sigma(1), \sigma(2), \ldots, \sigma(k))$ of operations starting with an empty filter, and where the suffix $\sigma(j + 1), \ldots, \sigma(k)$ is self-contained, the filter $G$ obtained after operation $\sigma(j)$ and the filter $F$ obtained after operation $\sigma(k)$ are guaranteed to satisfy $\accept{G} \subseteq \accept{F}$.
\end{definition}

Formally, in this section, we prove the following proposition:

\begin{proposition}
\label{prop:history_dependent}
    \cref{thm:filter_lb} holds for monotone filters.
\end{proposition}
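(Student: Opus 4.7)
The plan is to re-run essentially the same two-filter protocol $P_s$ from \cref{sec:warmup}, but with a carefully chosen operational history for Alice's filter $F$ so that the weaker monotonicity of this section's setting still applies at each protocol step. Alice constructs $F$ by inserting $x_1,\ldots,x_n$ in this canonical order, and Bob in stage~1 builds $G$ in lockstep. In stage~2, Bob modifies $F$ itself via the deletions $x_n,x_{n-1},\ldots,x_{s+1}$, producing history-dependent filters $F_k^{\mathrm{hist}}$ with true key set $\{x_1,\ldots,x_k\}$. For each pair of filters the protocol compares, I would exhibit an explicit operation sequence visiting both states with a self-contained suffix between them: stage~1 is immediate since $G_{k-1}$'s history is a prefix of $F$'s, and in stage~2 the unified sequence ``insert $x_1,\ldots,x_s$ (reaching $G_s$), then insert $x_{s+1},\ldots,x_n$, then delete $x_n,\ldots,x_{k+1}$ (reaching $F_k^{\mathrm{hist}}$)'' has a self-contained suffix from $G_s$ onward, yielding $\accept{G_s}\subseteq\accept{F_k^{\mathrm{hist}}}$ for every $k\ge s$.

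With monotonicity in hand, the stage-1 analysis proceeds exactly as in the warmup: writing $a_k=\E[|\accept{G_k}|-|\accept{G_{k-1}}|]/C$ for the increments of the canonical insertion chain (with $C=(1-\eps)n+\eps|U|$), these are non-negative and sum to at most $1$, and the conditioning argument of \cref{clm:entropy_single_key_independent} gives a per-round entropy of $\log|U|+(1-\eps)\log a_{[k,n]}+\log\eps+o(1)$ for sending $x_k$.

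The main obstacle is stage~2: the filters $\{F_k^{\mathrm{hist}}\}_{k=s}^{n}$ do \emph{not} form a monotone chain, because between any two of them the candidate suffix consists of a deletion of a key inserted outside the suffix, which is never self-contained. This destroys the warmup's telescoping quantity $a_{(s,k]}$. My plan is to replace it with the per-$k$ quantity $b_k\coloneqq \E[|\accept{F_k^{\mathrm{hist}}}|-|\accept{G_s}|]/C$, which satisfies $b_k\ge 0$ by the monotonicity above and $b_k\le ((k-s)+\eps|U|)/C$ by the standard false-positive bound on $F_k^{\mathrm{hist}}$. A useful auxiliary inequality is $b_k\ge a_{(s,k]}$, which follows from the inclusion $\accept{F_k^{\mathrm{canon}}}\subseteq\accept{F_k^{\mathrm{hist}}}$ (proved via the unified sequence ``insert $x_1,\ldots,x_n$, then delete $x_n,\ldots,x_{k+1}$'', whose suffix from the $k$-th insert onward is self-contained). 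Plugging $b_k$ into \cref{clm:entropy_single_key_independent} gives the stage-2 per-round entropy $\log|U|+(1-\eps)\log b_k+\log\eps+o(1)$, mirroring the warmup's form.

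The hardest step will be establishing an analog of \cref{lem:choice_of_s} that produces an $s$ with $\sum_{k=1}^s\log a_{[k,n]}+\sum_{k=s+1}^n\log b_k\le -n\log e+o(n)$ from only the above constraints on $(a_k)$ and $(b_k)$. The difficulty is the ``sticky false positive'' regime, where the $b_k$'s remain close to $1$ throughout stage~2 while the $a$-sequence concentrates its mass near $k=n$, so neither sum alone delivers the required savings. My plan for that regime is to leverage the interaction between $b_k$ and $a_k$: persistent stickiness forces the filter to encode the operation history through its surviving false positives, which, together with the inclusion $\accept{F_k^{\mathrm{canon}}}\subseteq\accept{F_k^{\mathrm{hist}}}$, constrains how the canonical $a$-mass can be distributed and propagates enough structure back to stage~1 for a Karamata-type argument to succeed. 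Quantifying this coupling into a unified lemma covering both the non-sticky regime (where stage~2 suffices) and the sticky regime (where an indirect argument boosts stage~1) is the technical crux of \cref{sec:history_dependent}.
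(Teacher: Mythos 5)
Your proposal correctly diagnoses \emph{where} the warmup argument breaks: without history independence, the filter $F_k^{\mathrm{hist}}$ (reached by inserting all $n$ keys and then deleting a suffix) and the filter $G_k$ (reached by inserting a prefix) can have very different accepted sets, and in the adversarial ``sticky'' regime where $|\accept{F_k^{\mathrm{hist}}}|$ stays near $\eps|U|$ for all $k$ while $|\accept{G_k}|$ stays near zero, no choice of $s$ in the two-sum $\sum_{k\le s}\log a_{[k,n]}+\sum_{k>s}\log b_k$ yields the $n\log e$ savings (for instance, when the $a$-mass concentrates at $a_n$ and every $b_k\approx 1$, both sums vanish for every $s$). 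However, your plan for closing this gap---that ``persistent stickiness forces the filter to encode the operation history through its surviving false positives,'' which somehow ``propagates enough structure back to stage~1''---is not a mechanism; it is a restatement of the hope that stickiness is impossible, and you give no concrete quantity to chain through. The auxiliary inclusion $\accept{F_k^{\mathrm{canon}}}\subseteq\accept{F_k^{\mathrm{hist}}}$ is correct, but note it gives $b_k\ge a_{(s,k]}$, which bounds the stage-2 terms from the \emph{wrong} side: it makes the entropy upper bound worse, not better, so it cannot be the lever you need.

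The paper resolves this with a genuinely different idea that is absent from your proposal: rather than trying to wring savings out of the divergent pair $(F^{\mathrm{hist}}_k, G_k)$, it \emph{redefines} both filters via an ``obfuscation sequence,'' a random operational history read off a DFS of a random tree, designed so that the prefix producing $F_k$ and the prefix producing $G_k$ are nearly identically distributed (conditioning on one extra child of a tree node shifts one distribution onto the other, yielding $\E[|\accept{F_k}|]\le\E[|\accept{G_k}|]+O((\eps|U|+n)/M)$). This coupling makes the two increment sequences essentially equal up to a $1/M$ slack, after which the single-sequence \cref{lem:choice_of_s} applies verbatim. Controlling the blow-up of the obfuscation sequence then forces batching the $n$ keys into $b=\omega(1)$ groups and a random partition of $U$---none of which appears in your sketch. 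So the crux you flag as ``the technical crux of \cref{sec:history_dependent}'' is exactly the step where the paper introduces its new tool; your proposal identifies the right obstacle but does not supply the idea that overcomes it.
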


\subsection{Proof Intuition for Proposition~\ref{prop:history_dependent}}
\label{subsec:intuition_dependent}
We begin by analyzing the proof of \cref{prop:warmup} to identify precisely where the assumptions of history independence and monotonicity are used.

In the proof of \cref{prop:warmup}, Bob initially has two filters $F$ and $G$ with true key sets $\BK{x_1, \ldots, x_n}$ and $\varnothing$, respectively. Whenever Alice sends a key, Bob either deletes it from $F$ or inserts it into $G$, keeping both the true key sets of $F$ and $G$ being prefixes of $\BK{x_1, \ldots, x_n}$. 
For any $k \in \BK{0,1,\ldots, n}$, let $F_k$ denote the filter obtained by deleting $x_n, x_{n-1}, \ldots, x_{k+1}$ from the initial $F$ and let $G_k$ denote the filter obtained by inserting $x_1, x_2, \ldots, x_{k}$ into the initial $G$. In \cref{prop:warmup}, where we assume $\calD$ is history independent, since $F_k$ and $G_k$ share the same true key set $\BK{x_1, \ldots, x_k}$, they are actually the same filter states. However, in \cref{prop:history_dependent} (without history independence), as $F_k$ (obtained by first inserting $x_1, \ldots, x_n$ to an empty filter and then deleting $x_n, \ldots, x_{k+1}$) and $G_k$ (obtained by directly inserting $x_1, \ldots, x_k$ to an empty filter) are constructed from different operational histories, they are different filter states.

The proof of \cref{prop:warmup} heavily relies on the fact that $F_k$ and $G_k$ represent the same filter state to upper bound the communication cost. When Alice sends $x_k$ to Bob, the expected communication cost is primarily determined by how many elements are in the sub-universe $\accept{F_{r_k}} \setminus \accept{G_{\l_k}}$. In particular, the expected communication cost is approximately  
\[
    \E\Bk*{\log \abs*{\accept{F_{r_k}} \setminus \accept{G_{\l_k}}}} 
    \le \log \E\Bk*{\abs*{\accept{F_{r_k}} \setminus \accept{G_{\l_k}}}}
    = \log \bk*{\E\Bk*{\abs*{\accept{F_{r_k}}}} - \E\Bk*{\abs*{\accept{G_{\l_k}}}}},
\]
where the equality follows from monotonicity. Since $F_k$ and $G_k$ are identical filter states, we can substitute $\E\Bk*{\abs*{\accept{G_{\l_k}}}}$ with $\E\Bk*{\abs*{\accept{F_{\l_k}}}}$ in the above equation. This substitution enables us to apply \cref{lem:choice_of_s} to select a suitable parameter $s$ that the protocol can use to achieve a significant reduction in communication cost. However, the same argument does not extend to the case where $F_k$ and $G_k$ are generally different filter states. Specifically, it is possible that for each $k \leq n$, the size $\abs{\accept{F_k}}$ remains close to $\eps |U|$, while $\abs*{\accept{G_k}}$ stays close to zero. In this scenario, no choice of $s$ would significantly reduce the communication cost. 

To address this issue, the main technical contribution of this section is to modify the definitions of $F_k$ and $G_k$ using a more complicated operational sequence (called the \defn{obfuscation sequence}). This modification ensures that, although $F_k$ and $G_k$ are still distinct filter states, their distributions under all sources of randomness are close. This closeness guarantees that $\E\Bk*{\abs*{\accept{F_{k}}}}$ is close to $\E\Bk*{\abs*{\accept{G_k}}}$, which in turn allows us to complete the proof.

To illustrate the high-level idea of the obfuscation sequence, we start with the filter states $F_k$'s and $G_k$'s defined previously and demonstrate how to modify them step by step to make their distributions close. Since $F_n$ and $G_n$ are already identical (both obtained by inserting $x_1, \ldots, x_n$), the first nontrivial case to consider is obfuscating $F_{n-1}$ and $G_{n-1}$. 
Recall that, compared to the operational sequence $\sigma_{G_{n-1}}$ that produces $G_{n-1}$ (i.e., inserting $x_1, \ldots, x_{n-1}$), the sequence $\sigma_{F_{n-1}}$ leading to $F_{n-1}$ includes an additional insertion and deletion of $x_n$, making the two filter states different. To obfuscate this difference, we introduce additional operations in $\sigma_{G_{n-1}}$: After inserting $x_1, \ldots, x_{n-1}$, we repeatedly insert and delete a random key, performing this operation a random number of times chosen from $[0, M - 1]$ for a large enough parameter $M$. Then, due to the randomness of $x_n$, the operational sequence $\sigma_{F_{n-1}}$ can be interpreted as follows: after inserting $x_1, \ldots, x_{n-1}$, we again repeatedly insert and delete a random key, but now for a random number of times chosen from $[1, M]$. Since these distributions are close, this brings the distributions of $F_{n-1}$ and $G_{n-1}$ closer together. 

Here, we emphasize that it is crucial to maintain $\sigma_{G_{n-1}}$ as a prefix of $\sigma_{F_{n-1}}$, as this allows us to leverage the monotonicity between $F_{n-1}$ and $G_{n-1}$. This requirement rules out naive obfuscation strategies, such as keeping $\sigma_{F_{n-1}}$ unchanged while simply appending an additional insertion and deletion of a random key at the end of $\sigma_{G_{n-1}}$. Such an approach would disrupt the structural relationship between the sequences, making it ineffective for our purpose.

Similarly, we can further obfuscate $F_{n-2}$ and $G_{n-2}$ using the same strategy. Suppose the sequence $\sigma_{F_{n-2}}$ leading to $F_{n-2}$ is longer than the sequence $\sigma_{G_{n-2}}$ leading to $G_{n-2}$ by a suffix $\sigma'$. To obfuscate this difference, we introduce additional operations at the end of $\sigma_{G_{n-2}}$ as follows: Sample independent sequences $\sigma_1', \ldots, \sigma_i'$ from the same distribution as $\sigma'$, where $i$ is a random number drawn from $[0, M-1]$. Then, after inserting $x_1, \ldots, x_{n-2}$, we append $\sigma_1', \ldots, \sigma_i'$ to the end of $\sigma_{G_{n-2}}$, making its distribution closely match that of $\sigma_{F_{n-2}}$.  

This process can be applied recursively to obfuscate $F_{n-3}$ and $G_{n-3}$, $F_{n-4}$ and $G_{n-4}$, continuing until we reach $F_0$ and $G_0$. To formally define the resulting operational sequences and capture the recursive structure, we introduce a tree-based framework called the \emph{obfuscating tree}, which will be detailed in the full proof.  

Finally, even with these obfuscated filter states, several technical challenges remain. The previous construction of operational sequences involves $n$ recursive steps, each increasing the sequence length by a factor of $M$. As a result, the final sequence grows exponentially, which is impractical.  

To mitigate this, Alice divides the $n$ keys $x_1, \dots, x_n$ into $b$ batches, where $b$ is a relatively small superconstant, with each batch containing $n/b$ keys. Bob updates his filter only after receiving an entire batch, reducing the number of filters he maintains to $b$. This modification limits the recursion depth to $b$, yielding an operational sequence of length $M^b$ instead of $M^n$.  

Additionally, for technical reasons, Alice and Bob first agree on a random partition of the universe $U = U_1 \cup U_2 \cup \dots \cup U_b$, ensuring that keys in the $k$-th batch are sampled from $U_k$. While unnecessary in this section, this step will be crucial when addressing the removal of the monotonicity assumption in \cref{sec:non-monotone}.

\subsection{Formal Proof of Proposition~\ref{prop:history_dependent}}
The proof of \cref{prop:history_dependent} is still based on a one-way communication game between Alice and Bob. Suppose Alice and Bob have access to the random tape of $\calD$ together with another public random tape with free randomness. At the beginning of the game, Alice and Bob first agree on a random partition of the universe that is determined by the free public randomness.

\paragraph{Random partition of the universe.} Let $b = \omega(1)$ be a parameter to be determined. Let $\pi = \bk{U_1, \ldots, U_b}$ be a partition of the universe $U$ that is sampled uniformly at random, such that $|U_k| = |U|/b$ for each $k \le b$.
Based on this random partition $\pi$, the goal of the communication game is as follows.

For each $k \le b$, let $X_k = \bk*{x^{(k)}_1 , \ldots, x^{(k)}_{n/b}}$ be a sequence of distinct keys from $U_k$ which is sampled uniformly at random. Alice wants to transmit $\bk*{X_1, \ldots, X_b}$ to Bob via the one-way communication protocol. Similar to \cref{clm:entropy_message_lb_independent}, no matter how Alice sends her message, she needs to communicate at least $n\log\bk{|U|/b} - o(n)$ entropy.

\begin{claim}
    \label{clm:message_entropy_lb_dependent}
    The message entropy sent by Alice is at least $b \log \bk*{\bk*{{|U|}/{b}}^{\underline{n/b}}}$.\footnote{As in earlier sections, we use the notation $m^{\underline{n}}$ to represent the falling factorial of $m$ of order $n$, i.e., $m(m-1)\cdots(m-n+1)$.}
\end{claim}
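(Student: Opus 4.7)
The plan is to mirror the proof of Claim~\ref{clm:entropy_message_lb_independent} with only cosmetic modifications to handle the partition $\pi$ and the $b$ independent batches. First I would observe that Bob can recover $(X_1, \ldots, X_b)$ from Alice's message together with all randomness shared between them, which now consists of both the random tape of $\calD$ \emph{and} the public random tape that determines the partition $\pi$. A standard source-coding argument then yields
\[
H(\text{message}) \ge H\bk*{X_1, \ldots, X_b \mid \pi, \text{random tape of } \calD}.
\]

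Next I would compute this conditional entropy. Since $(X_1, \ldots, X_b)$ is drawn independently of $\calD$'s random tape, the tape drops out of the conditioning. Conditional on $\pi$, the batches $X_1, \ldots, X_b$ are mutually independent, and each $X_k$ is uniformly distributed over the $(|U|/b)^{\underline{n/b}}$ ordered tuples of $n/b$ distinct elements of $U_k$. Summing the resulting per-batch entropy over $k$ yields the claimed bound of $b \log\bk*{(|U|/b)^{\underline{n/b}}}$.

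I do not anticipate any real obstacle here: the claim is essentially Claim~\ref{clm:entropy_message_lb_independent} restated in the batched setting, and the only thing that requires a moment of care is the bookkeeping of the sources of shared randomness when stripping them from the conditioning via independence. The substantive work of the section --- the obfuscation sequence, the tree-based construction, and the argument that the distributions of $F_k$ and $G_k$ are close --- lies elsewhere and does not enter this particular claim.
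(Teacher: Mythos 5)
Your proposal is correct and follows essentially the same route as the paper: apply the source-coding bound with all shared randomness in the conditioning, note that $(X_1,\dots,X_b)$ is independent of everything except $\pi$, and then compute the conditional entropy as a sum over the $b$ independent uniform batches. The paper compresses this into a single chain of equalities, but the reasoning is identical.
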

\begin{proof}
    As Bob can decode $(X_1, \ldots, X_b)$ from the message, the entropy of the message is at least
    \begin{align*}
        &H\bk{X_1, \ldots, X_b \mid \text{random tapes}}
        = H\bk{X_1, \ldots, X_b \mid \pi}
        = \sum_{k=1}^b H\bk{X_k\mid U_k} 
        = b \log \bk*{\bk*{{|U|}/{b}}^{\underline{n/b}}}. \qedhere
    \end{align*}
\end{proof}

\paragraph{Obfuscation sequences.}

Next, we construct the protocols $Q_s$ that allow Alice to send $(X_1, \ldots, X_b)$ indirectly to Bob. The protocol is similar to the protocols $P_s$, but we will perform additional obfuscation operations on the filters $F$ and $G$ to make their distributions similar, as described in \cref{subsec:intuition_dependent}. Below, we construct the sequence of operations that we will use for constructing $F$ and $G$---we call this the \defn{obfuscation sequence}.

We first construct a random depth-$b$ tree $T$, referred to as the \defn{obfuscating tree}. Let $M$ be a large integer parameter to be determined. The tree $T$ is built recursively: We begin with the root node $v_0$, designated as the \emph{level-0 node}. Using the public random tape, we sample $d_{v_0}$ as a uniformly random integer from $[1, M]$ and create $d_{v_0}$ children for $v_0$, forming the \emph{level-1 nodes}. 
Similarly, for each level-1 node $v$, we independently sample $d_v \in [1, M]$ and create $d_v$ children. All the children of level-1 nodes collectively form the \emph{level-2 nodes}. This process continues iteratively until we reach level $b$, at which point we obtain a depth-$b$ tree $T$. Finally, for each $k \leq b$, we refer to the rightmost level-$k$ node as $v_k$, and then the rightmost path of $T$ can be represented as $\bk{v_0, \ldots, v_b}$. See \cref{fig:obfuscating_tree}.

\begin{figure}[htbp]
    \centering
    \hspace{2cm}
    \scalebox{0.9}{\begin{tikzpicture}[
    level distance=1.5cm,
    level 1/.style={sibling distance=4cm},
    level 2/.style={sibling distance=2.5cm},
    level 3/.style={sibling distance=1cm},
    every node/.style={circle, draw, minimum size=0.6cm}
]
    \node[label=above:{$v_0$}] (root) {}
        child {node (l1) {}  %
            child {node (l2-1) {}
                child {node (l3-1) {}}
                child {node (l3-2) {}}
                child {node (l3-3) {}}
            }
        }
        child {node[label=right:{$v_1$}] (r1) {}
            child {node (r2-1) {}
                child {node (r3-1) {}}
                child {node (r3-2) {}}
            }
            child {node[label=right:{$v_2$}] (r2-2) {}
                child {node (r3-3) {}}
                child {node (r3-4) {}}
                child {node[label=right:{$v_3$}] (r3-5) {}}
            }
        };

    \node[draw=none, anchor=west] at (5.5, 0) {level-0};
    \node[draw=none, anchor=west] at (5.5, -1.5) {level-1};
    \node[draw=none, anchor=west] at (5.5, -3.0) {level-2};
    \node[draw=none, anchor=west] at (5.5, -4.5) {level-3};

    \node[draw=none] at ($(root)!0.5!(r1) + (0.4,0.1)$) {$X_1$};
    \node[draw=none] at ($(r1)!0.5!(r2-2) + (0.4,0.1)$) {$X_2$};
    \node[draw=none] at ($(r2-2)!0.5!(r3-5) + (0.4,0.1)$) {$X_3$};

\end{tikzpicture}}
    \caption{Obfuscating tree of depth $b = 3$. Each node has a randomly chosen number of children from the range $[1,3]$. The rightmost path, represented as $(v_0, v_1, v_2, v_3)$, has edge labels $X_1, X_2, X_3$.}
    \label{fig:obfuscating_tree}
\end{figure}
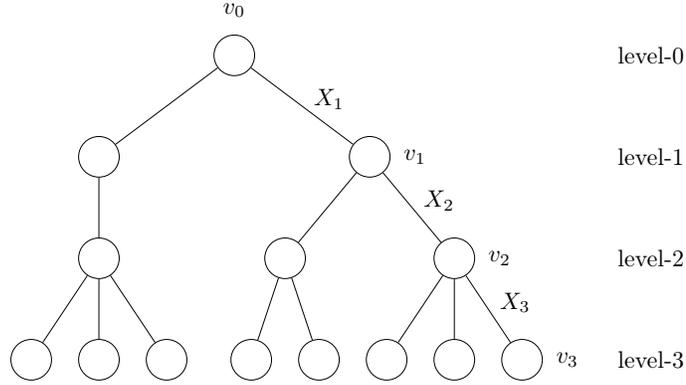

Next, we label each edge of $T$ with a sequence of keys. For each $k \leq b$, the rightmost edge $\BK{v_{k-1},v_k}$ is labeled with Alice's keys $X_k$. For any other edge $e$ that connects a level-$(k-1)$ node to a level-$k$ node---referred to as a \emph{level-$k$} edge---we independently sample a sequence $Y_k^{(e)} = (y_1, \dots, y_{n/b})$ of distinct keys from $U_k$ and assign $Y_k^{(e)}$ to $e$. As a result, each level-$k$ edge is labeled with a sequence of $n/b$ distinct keys from $U_k$.

The edge-labeled tree $T$ naturally induces an operational sequence $\sigma$, referred to as the \defn{obfuscation sequence}, as follows. The sequence $\sigma$ is defined according to a depth-first search (DFS) traversal of $T$. Whenever the traversal moves down along an edge, we insert the keys that appear in the edge label, and whenever it moves up along an edge, we delete the keys that appear in the edge label. 
In particular, when the traversal moves to a level-$k$ node $v$ of the tree, the corresponding prefix of $\sigma$ will result in a filter with its true key set being the union of edge labels along the path from $v$ to the root $v_0$ (i.e., consisting of $n/b$ keys from each of $U_1, U_2, \ldots, U_k$).

Throughout the protocol $Q_s$, the filters $F$ and $G$ will always have their operational histories as prefixes of $\sigma$. Specifically, for each $k \le b$, define $\sigma_{G_k}$ as the prefix of $\sigma$ that terminates just before inserting $X_{k+1}$ (corresponding to the traversal reaching the point where all children of $v_k$, except for $v_{k+1}$, have been visited, and the traversal is about to move to $v_{k+1}$). As a special case, $\sigma_{G_b}$ terminates just after inserting $X_b$. Similarly, for each $k \le b$, define $\sigma_{F_k}$ as the prefix of $\sigma$ that terminates just before deleting $X_k$ (corresponding to the traversal just before moving up from $v_k$ to $v_{k-1}$). See \cref{fig:obfuscating_sequence}.

Let $F_k$ and $G_k$ denote the filter states obtained after applying the operational sequence $\sigma_{F_k}$ and $\sigma_{G_k}$ on the initial empty filter, respectively. Then, both $F_k$ and $G_k$ will have their true key sets being the union of $X_1, \ldots, X_k$. Moreover, for any $\l \le r$, it is easy to check that $\sigma_{G_\l}$ is a prefix of $\sigma_{F_r}$ and their difference is a self-contained operational sequence, which implies $\accept{G_{\l}} \subseteq \accept{F_r}$ by the monotonicity assumption. Similarly, it is easy to check that the accepted set $\accept{G_{k}}$ is non-decreasing in $k$.\footnote{Note that, unlike $\accept{G_k}$, we cannot conclude that $\accept{F_k}$ is non-decreasing in $k$. This is because $\sigma_{F_k}$ is not necessarily a prefix of $\sigma_{F_{k+1}}$---in fact, the latter is shorter than the former. Fortunately, the monotonicity of $G_{k}$ will suffice for our purposes.} 

\begin{figure}[htbp]
    \centering
    \begin{subfigure}{0.45\textwidth}
        \centering
        \scalebox{0.9}{\begin{tikzpicture}[
    level distance=1.5cm,
    level 1/.style={sibling distance=4cm},
    level 2/.style={sibling distance=2.5cm},
    level 3/.style={sibling distance=1cm},
    every node/.style={circle, draw, minimum size=0.6cm}
]
    \node[label=above:{$v_0$}, fill=gray!30] (root) {}
        child {node[fill=gray!30] (l1) {}  %
            child {node[fill=gray!30] (l2-1) {}
                child {node[fill=gray!30] (l3-1) {}}
                child {node[fill=gray!30] (l3-2) {}}
                child {node[fill=gray!30] (l3-3) {}}
            }
        }
        child {node[label=right:{$v_1$}, fill=gray!30] (r1) {}
            child {node[fill=gray!30] (r2-1) {}
                child {node[fill=gray!30] (r3-1) {}}
                child {node[fill=gray!30] (r3-2) {}}
            }
            child {node[label=right:{$v_2$}] (r2-2) {}
                child {node (r3-3) {}}
                child {node (r3-4) {}}
                child {node[label=right:{$v_3$}] (r3-5) {}}
            }
        };

    \node[draw=none] at ($(root)!0.5!(r1) + (0.4,0.1)$) {$X_1$};
    \node[draw=none] at ($(r1)!0.5!(r2-2) + (0.4,0.1)$) {$X_2$};
    \node[draw=none] at ($(r2-2)!0.5!(r3-5) + (0.4,0.1)$) {$X_3$};

    \draw[-{Stealth[length=3mm]}, red, line width=0.8pt, bend right=10] (root) to (l1);
    \draw[-{Stealth[length=3mm]}, red, line width=0.8pt, bend right=10] (l1) to (l2-1);
    \draw[-{Stealth[length=3mm]}, red, line width=0.8pt, bend right=10] (l2-1) to (l3-1);
    \draw[-{Stealth[length=3mm]}, red, line width=0.8pt, bend right=10] (l3-1) to (l2-1);
    \draw[-{Stealth[length=3mm]}, red, line width=0.8pt, bend right=10] (l2-1) to (l3-2);
    \draw[-{Stealth[length=3mm]}, red, line width=0.8pt, bend right=10] (l3-2) to (l2-1);
    \draw[-{Stealth[length=3mm]}, red, line width=0.8pt, bend right=10] (l2-1) to (l3-3);
    \draw[-{Stealth[length=3mm]}, red, line width=0.8pt, bend right=10] (l3-3) to (l2-1);
    \draw[-{Stealth[length=3mm]}, red, line width=0.8pt, bend right=10] (l2-1) to (l1);
    \draw[-{Stealth[length=3mm]}, red, line width=0.8pt, bend right=10] (l1) to (root);
    
    \draw[-{Stealth[length=3mm]}, red, line width=0.8pt, bend right=10] (root) to (r1);
    \draw[-{Stealth[length=3mm]}, red, line width=0.8pt, bend right=10] (r1) to (r2-1);
    \draw[-{Stealth[length=3mm]}, red, line width=0.8pt, bend right=10] (r2-1) to (r3-1);
    \draw[-{Stealth[length=3mm]}, red, line width=0.8pt, bend right=10] (r3-1) to (r2-1);
    \draw[-{Stealth[length=3mm]}, red, line width=0.8pt, bend right=10] (r2-1) to (r3-2);
    \draw[-{Stealth[length=3mm]}, red, line width=0.8pt, bend right=10] (r3-2) to (r2-1);
    \draw[-{Stealth[length=3mm]}, red, line width=0.8pt, bend right=10] (r2-1) to (r1);
        
\end{tikzpicture}}
        \caption{Operational sequence $\sigma_{G_1}$.}
        \label{fig:sigma_G}
    \end{subfigure}
    \hfill
    \begin{subfigure}{0.45\textwidth}
        \centering
        \scalebox{0.9}{\begin{tikzpicture}[
    level distance=1.5cm,
    level 1/.style={sibling distance=4cm},
    level 2/.style={sibling distance=2.5cm},
    level 3/.style={sibling distance=1cm},
    every node/.style={circle, draw, minimum size=0.6cm}
]
    \node[label=above:{$v_0$}, fill=gray!30] (root) {}
        child {node[fill=gray!30] (l1) {}  %
            child {node[fill=gray!30] (l2-1) {}
                child {node[fill=gray!30] (l3-1) {}}
                child {node[fill=gray!30] (l3-2) {}}
                child {node[fill=gray!30] (l3-3) {}}
            }
        }
        child {node[label=right:{$v_1$}, fill=gray!30] (r1) {}
            child {node[fill=gray!30] (r2-1) {}
                child {node[fill=gray!30] (r3-1) {}}
                child {node[fill=gray!30] (r3-2) {}}
            }
            child {node[label=right:{$v_2$}, fill=gray!30] (r2-2) {}
                child {node[fill=gray!30] (r3-3) {}}
                child {node[fill=gray!30] (r3-4) {}}
                child {node[label=right:{$v_3$}, fill=gray!30] (r3-5) {}}
            }
        };

    \node[draw=none] at ($(root)!0.5!(r1) + (0.4,0.1)$) {$X_1$};
    \node[draw=none] at ($(r1)!0.5!(r2-2) + (0.4,0.1)$) {$X_2$};
    \node[draw=none] at ($(r2-2)!0.5!(r3-5) + (0.4,0.1)$) {$X_3$};

    \draw[-{Stealth[length=3mm]}, red, line width=0.8pt, bend right=10] (root) to (l1);
    \draw[-{Stealth[length=3mm]}, red, line width=0.8pt, bend right=10] (l1) to (l2-1);
    \draw[-{Stealth[length=3mm]}, red, line width=0.8pt, bend right=10] (l2-1) to (l3-1);
    \draw[-{Stealth[length=3mm]}, red, line width=0.8pt, bend right=10] (l3-1) to (l2-1);
    \draw[-{Stealth[length=3mm]}, red, line width=0.8pt, bend right=10] (l2-1) to (l3-2);
    \draw[-{Stealth[length=3mm]}, red, line width=0.8pt, bend right=10] (l3-2) to (l2-1);
    \draw[-{Stealth[length=3mm]}, red, line width=0.8pt, bend right=10] (l2-1) to (l3-3);
    \draw[-{Stealth[length=3mm]}, red, line width=0.8pt, bend right=10] (l3-3) to (l2-1);
    \draw[-{Stealth[length=3mm]}, red, line width=0.8pt, bend right=10] (l2-1) to (l1);
    \draw[-{Stealth[length=3mm]}, red, line width=0.8pt, bend right=10] (l1) to (root);
    
    \draw[-{Stealth[length=3mm]}, red, line width=0.8pt, bend right=10] (root) to (r1);
    \draw[-{Stealth[length=3mm]}, red, line width=0.8pt, bend right=10] (r1) to (r2-1);
    \draw[-{Stealth[length=3mm]}, red, line width=0.8pt, bend right=10] (r2-1) to (r3-1);
    \draw[-{Stealth[length=3mm]}, red, line width=0.8pt, bend right=10] (r3-1) to (r2-1);
    \draw[-{Stealth[length=3mm]}, red, line width=0.8pt, bend right=10] (r2-1) to (r3-2);
    \draw[-{Stealth[length=3mm]}, red, line width=0.8pt, bend right=10] (r3-2) to (r2-1);
    \draw[-{Stealth[length=3mm]}, red, line width=0.8pt, bend right=10] (r2-1) to (r1);
    
    \draw[-{Stealth[length=3mm]}, red, line width=0.8pt, bend right=10] (r1) to (r2-2);
    \draw[-{Stealth[length=3mm]}, red, line width=0.8pt, bend right=10] (r2-2) to (r3-3);
    \draw[-{Stealth[length=3mm]}, red, line width=0.8pt, bend right=10] (r3-3) to (r2-2);
    \draw[-{Stealth[length=3mm]}, red, line width=0.8pt, bend right=10] (r2-2) to (r3-4);
    \draw[-{Stealth[length=3mm]}, red, line width=0.8pt, bend right=10] (r3-4) to (r2-2);
    \draw[-{Stealth[length=3mm]}, red, line width=0.8pt, bend right=10] (r2-2) to (r3-5);
    \draw[-{Stealth[length=3mm]}, red, line width=0.8pt, bend right=10] (r3-5) to (r2-2);
    \draw[-{Stealth[length=3mm]}, red, line width=0.8pt, bend right=10] (r2-2) to (r1);
    
\end{tikzpicture}}
        \caption{Operational sequence $\sigma_{F_1}$.}
        \label{fig:sigma_F}
    \end{subfigure}
    \caption{Operational sequences $\sigma_{G_1}$ and $\sigma_{F_1}$. The sequence $\sigma_{G_1}$ follows a DFS traversal of $T$ and stops before traversing $(v_1, v_2)$, while the sequence $\sigma_{F_1}$ terminates just after traversing $(v_2, v_1)$.}
    \label{fig:obfuscating_sequence}
\end{figure}
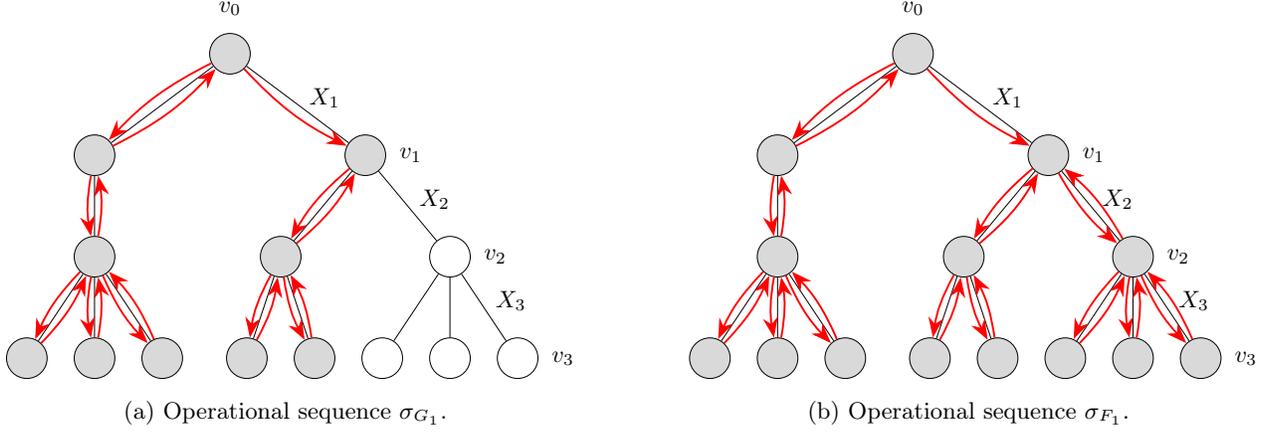

Based on these obfuscated filter states, the protocol $Q_s$ is defined as follows in \cref{alg:history_dependent_protocol}.

\begin{algorithm}[htbp]
  \caption{Protocol $Q_s$}
  \label{alg:history_dependent_protocol}
  \SetKwRepeat{Do}{do}{while}
    \SetKwFunction{Send}{Send} 
    \SetKwProg{Fn}{Subroutine}{:}{}
  
  \DontPrintSemicolon
    \Fn(\tcp*[f]{$X_k$ is contained in $F_r$'s true key set, but disjoint with $G_\l$'s.}){\Send{$X_k, F_r, G_\l$}}{
        \For{each $x$ in $X_k$} {
            $Z_x \longleftarrow$ indicator for whether $x \in \accept{G_\l}$\;
            Alice sends $Z_x$ to Bob\;
            \uIf {$Z_x = 0$ (i.e., $x \notin \accept{G_\l}$)} {
                Alice sends $x$ as an element in $\bk*{\accept{F_r} \setminus \accept{G_\l}} \cap U_k$\;
            } \Else{
                Alice sends $x$ as an element in $U_k$\;
            }  
        }
    }
    Alice prepares $F_b$ and sends it to Bob\;
    Bob prepares $G_0$ by himself\;
    \For{$k$ from $1$ to $s$} { 
        Alice sends $X_k$ using \Send{$X_k, F_{b}, G_{k-1}$}\;
        Bob computes $G_{k}$ using $X_k$ and $G_{k-1}$\;
    }
    \For{$k$ from $b$ to $s + 1$} {
        Alice sends $X_k$ using \Send{$X_k, F_k, G_s$}\;
        Bob computes $F_{k-1}$ using $X_k$ and $F_{k}$\;
    }
\end{algorithm}

Note that the obfuscating tree $T$, along with all its edge labels (except for the rightmost edges), is sampled using only public randomness. Hence, Bob can construct $G_0$ by executing $\sigma_{G_0}$ on an initially empty filter without any input from Alice. More generally, as long as Bob knows $X_k$, he can update the filter $G_{k-1}$ to $G_k$ by applying the difference between $\sigma_{G_k}$ and $\sigma_{G_{k-1}}$; likewise, as long as Bob knows $X_k$, he can also update the filter $F_k$ to $F_{k-1}$ by applying the difference between $\sigma_{F_{k-1}}$ and $\sigma_{F_k}$.

Below, we analyze the communication cost of the protocol $Q_s$ and determine a suitable choice of $s \in [0,b]$ to establish the desired lower bound for the filter.

\paragraph{Cost of sending a single key.}
In the main part of the protocol $Q_s$, Alice transmits each key to Bob using $\send{X_k, F_r, G_{\l}}$, where $\l < k \le r$ (meaning that $X_k$ is part of $F_r$'s true key set but is disjoint from $G_\l$'s). We begin by upper bounding the message entropy incurred by the subroutine $\send{X_k, F_r, G_{\l}}$.

Similar to the proof of \cref{prop:warmup}, we define parameters $a_0, \ldots, a_b$ such that 
\begin{align*}
    a_k \defeq \frac{1}{{(1 - \eps) n + \eps |U|}} \cdot {\E\Bk*{\abs{\accept{G_k}} - \abs{\accept{G_{k-1}}}}},
\end{align*}
where the expectation is taken over all sources of randomness, including the random tapes and the randomly sampled keys $X_1, \ldots, X_b$.
Since $\accept{G_k}$ is increasing in $k$, we get $a_k \ge 0$ for each $k$. Moreover, we have
\begin{align*}
    \E\Bk*{\abs{\accept{G_k}}} = \bk*{(1 - \eps) n + \eps |U|} \cdot a_{[0,k]}
\end{align*}
for each $k$, and $a_{[0,b]} = \E\Bk*{\abs{\accept{G_b}}} / \bk*{(1 - \eps) n + \eps |U|} \le 1$. Below, we establish an upper bound for message entropy of $\send{X_k, F_r, G_{\l}}$ in terms of $a_k$'s.

\begin{lemma}
    \label{lem:cost_for_single_X_k_dependent}
    Suppose parameters $b, M$ satisfy $M = 4^b$, $b = \omega(1)$, and $9^{b^2} = o( \eps |U| / n)$.\footnote{Recall that $|U| = \omega(n \eps^{-1})$. Hence, $\eps|U|/ n = \omega(1)$ and we can choose a superconstant $b$ such that $9^{b^2} = o( \eps |U| / n)$.}
    Then, the message entropy incurred by $\send{X_k, F_r, G_\l}$ is at most 
    \begin{align*}
        \frac{n}{b} \cdot \log \bk{|U|/b} +  \frac{n}{b} \log \eps + \bk{1 - \eps} \frac{n}{b} \cdot {\log { \bk*{a_{(\l, r]} + \frac{2}{4^b}}}} + o\bk*{\frac{n}{b}}.
    \end{align*}
\end{lemma}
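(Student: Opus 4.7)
The plan mirrors the warmup's per-key entropy analysis (\cref{clm:entropy_single_key_independent}), adapted to the sub-universe $U_k$ and augmented to handle the fact that, unlike in the warmup, $F_r$ and $G_\l$ are no longer identical as filter states even when their true key sets agree. The subroutine $\send{X_k, F_r, G_\l}$ transmits $n/b$ keys, and following the same Jensen-plus-Bayes calculation as the warmup (with $p_x = \Pr[Z_x = 1] \le \eps$ since $x$ lies outside the true key set of $G_\l$), the per-key entropy is at most
\[
\eps \log(|U|/b) + (1-\eps) \log \frac{\E[|(\accept{F_r} \setminus \accept{G_\l}) \cap U_k|]}{1-\eps} + o(1).
\]
The task thus reduces to upper bounding $\E[|(\accept{F_r} \setminus \accept{G_\l}) \cap U_k|]$.

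I would use two structural observations to convert this into something tractable. First, the entire construction is symmetric under permutations of the partition class labels, so $\E[|\accept{H} \cap U_k|] = \E[|\accept{H}|]/b$ for any filter state $H$ arising in the protocol. Second, since $\sigma_{G_\l}$ is a prefix of $\sigma_{F_r}$ whose missing suffix is self-contained, monotonicity yields $\accept{G_\l} \subseteq \accept{F_r}$. Combining these, $\E[|(\accept{F_r} \setminus \accept{G_\l}) \cap U_k|] = (\E[|\accept{F_r}|] - \E[|\accept{G_\l}|])/b$. Decomposing this as $(\E[|\accept{F_r}|] - \E[|\accept{G_r}|]) + (\E[|\accept{G_r}|] - \E[|\accept{G_\l}|])$, the second summand is exactly $((1-\eps)n + \eps|U|) \cdot a_{(\l, r]}$ by the definition of $a_k$ (noting that $\sigma_{G_\l}$ is a prefix of $\sigma_{G_r}$, so the corresponding telescoping sum collapses).

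The heart of the argument is bounding the first summand; this is where the obfuscation sequence pays off. The sequences $\sigma_{F_r}$ and $\sigma_{G_r}$ differ only in the number of level-$(r+1)$ subtree traversals appended after inserting $X_r$: $\sigma_{F_r}$ uses $d_{v_r} \sim \mathrm{Unif}[1,M]$ such traversals, while $\sigma_{G_r}$ uses $d_{v_r}-1 \sim \mathrm{Unif}[0,M-1]$. I would couple these two integer distributions so they agree on $\{1, \ldots, M-1\}$ with probability $(M-1)/M$ and use identical random seeds for the shared traversals and all their recursive sub-operations; then $F_r = G_r$ as filter states whenever the coupling agrees. When the coupling disagrees (probability $1/M = 1/4^b$), monotonicity gives $|\accept{F_r}| - |\accept{G_r}| \le |\accept{F_r}|$, and the filter's false-positive guarantee bounds $\E[|\accept{F_r}| \mid \text{disagree}] \le (1-\eps)n + \eps|U|$ regardless of the operational history conditioned upon. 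Hence $\E[|\accept{F_r}|] - \E[|\accept{G_r}|] \le ((1-\eps)n + \eps|U|)/4^b$, which gives $\E[|(\accept{F_r} \setminus \accept{G_\l}) \cap U_k|] \le ((1-\eps)n + \eps|U|)(a_{(\l, r]} + 1/4^b)/b$. Plugging this back, using $|U| = \omega(n\eps^{-1})$ to write $(1-\eps)n + \eps|U| = (1+o(1))\eps|U|$, using $h(\eps) = o(1)$, and summing over the $n/b$ keys of $X_k$ recovers the lemma (since $1/4^b \le 2/4^b$).

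The main obstacle is the coupling step: a naive total-variation bound would only give $\E[|\accept{F_r}|] - \E[|\accept{G_r}|] \le |U|/4^b$, which yields $1/(\eps \cdot 4^b)$ rather than $O(1/4^b)$ inside the logarithm---off by a factor of $1/\eps$, which is fatal when $\eps = o(1)$. The fix is to combine monotonicity (which lets us replace the symmetric difference by just $|\accept{F_r}|$) with the false-positive-rate guarantee (which extracts the missing $\eps$ factor via $\E[|\accept{F_r}|] \le (1-\eps)n + \eps|U|$).
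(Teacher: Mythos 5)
Your coupling argument for bounding $\E\bigl[\abs{\accept{F_r}}\bigr] - \E\bigl[\abs{\accept{G_r}}\bigr]$ is essentially the paper's Claim~\ref{clm:coupling}, phrased as a maximal coupling between $\mathrm{Unif}[1,M]$ and $\mathrm{Unif}[0,M-1]$ rather than as an identity of conditional expectations; both give the same bound $\bigl((1-\eps)n + \eps\abs{U}\bigr)/M$, and your remark about why the naive total-variation bound is off by a factor of $\eps^{-1}$ is exactly the right observation. The decomposition $\E\bigl[\abs{\accept{F_r}}\bigr] - \E\bigl[\abs{\accept{G_\l}}\bigr] = \bigl(\E\bigl[\abs{\accept{F_r}}\bigr] - \E\bigl[\abs{\accept{G_r}}\bigr]\bigr) + \bigl(\E\bigl[\abs{\accept{G_r}}\bigr] - \E\bigl[\abs{\accept{G_\l}}\bigr]\bigr)$ also matches the paper's calculation. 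So that part of the argument is sound.

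The genuine gap is your first structural claim that ``the entire construction is symmetric under permutations of the partition class labels, so $\E\bigl[\abs{\accept{H}\cap U_k}\bigr] = \E\bigl[\abs{\accept{H}}\bigr]/b$.'' This symmetry does not hold, and in fact cannot hold exactly. The obfuscation sequence ties partition classes rigidly to levels of the tree: level-$j$ edges carry keys drawn from $U_j$ specifically, and the true key set of $F_r$ contains $n/b$ keys from each of $U_1,\dots,U_r$ but none from $U_{r+1},\dots,U_b$. Consequently the filter state $H$ and the set $\accept{H}\setminus\acceptvar{}$ that you intersect with $U_k$ are correlated with the partition $\pi$ rather than independent of it --- indeed $\bigl(\accept{F_r}\setminus\accept{G_\l}\bigr)\cap U_k$ is guaranteed to contain all $n/b$ keys of $X_k$ when $\l < k \le r$, while the corresponding intersection with $U_j$ for $j > r$ contains no true keys. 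Relabeling $U_1\leftrightarrow U_j$ changes the distribution of the resulting filter state, so the permutation-symmetry argument does not go through. What is actually true (and what the paper proves as Claim~\ref{clm:removing_Uk}) is an \emph{approximate} factorization: conditioning on $\sigma$ and $\tapeD$, the keys outside the obfuscation sequence are assigned to $U_k$ with probability roughly $1/b$, while the at most $nM^{b+1}/b$ keys appearing in $\sigma$ can be arbitrarily placed, giving $\E\bigl[\abs{\,(\accept{F_r}\setminus\accept{G_\l})\cap U_k\,}\bigr] \le \frac{|U|}{b|U| - nM^{b+1}}\E\bigl[\abs{\accept{F_r}\setminus\accept{G_\l}}\bigr] + \frac{nM^{b+1}}{b}$. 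Showing these error terms are negligible is precisely where the hypothesis $9^{b^2} = o(\eps|U|/n)$ is consumed; your proof never invokes that hypothesis for this step, which is a sign that something was skipped. Replacing the exact-symmetry claim with this conditioning argument closes the gap (and is the reason the paper ends up with $2/4^b$ inside the logarithm rather than your $1/4^b$).
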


\begin{proof}
    Since the protocol $\send{X_k, F_r, G_\l}$ sends each key $x$ in $X_k$ separately, we only need to upper bound the cost of sending each single key $x$.

    Similar to \cref{prop:warmup}, to send a key $x$, Alice must first send a biased bit $Z_x \defeq \ind\Bk{x \in \accept{G_\l}}$, which has entropy at most $h(\eps)$. Then, depending on the value of $Z_x$, she either sends $\log \abs*{\bk*{\accept{F_r}\setminus \accept{G_\l}} \cap U_k}$ bits (in the common case where $Z_x = 0$) or $\log \abs{U_k}$ bits (in the rare case where $Z_x = 1$). Thus, the total entropy is at most
    \begin{align*}
        &h(\eps) + \E\Bk*{Z_x \log |U_k| + \bk{1 - Z_x} \log \abs*{\bk*{\accept{F_r}\setminus \accept{G_\l}} \cap U_k}}\\
        {}\le{} & h(\eps) + \eps \log |U_k| + \bk{1 - \eps} \E\Bk*{\log \abs*{\bk*{\accept{F_r}\setminus \accept{G_\l}} \cap U_k} \mid x \notin \accept{G_\l}}\\
        {}\le{} & h(\eps) + \eps \log \bk{|U|/b} + \bk{1 - \eps} \log \E\Bk*{\abs*{\bk*{\accept{F_r}\setminus \accept{G_\l}} \cap U_k} \mid x \notin \accept{G_\l}} \\
        {}\le{} & h(\eps) + \eps \log \bk{|U|/b} + \bk{1 - \eps} \log \frac{\E\Bk*{\abs*{\bk*{\accept{F_r}\setminus \accept{G_\l}} \cap U_k}}}{ \Pr\Bk*{x \notin \accept{G_\l}}}, \numberthis\label{ineq:entropy_single_key_dependent}
    \end{align*}
    where the second inequality is due to the concavity of the logarithm, and the third inequality uses a similar technique as \cref{clm:entropy_single_key_independent}: For any non-negative integer random variable $X$ and event $\mathcal{E}$, we have 
    \begin{align*}
        \E\Bk{X \mid \mathcal{E}} 
        = \sum_{k=0}^\infty k \cdot \Pr\Bk{X = k \mid \mathcal{E}} 
        \le \sum_{k=0}^\infty k \cdot \frac{\Pr\Bk{X = k}}{\Pr\Bk{\mathcal{E}}}
        = \frac{\E\Bk{X}}{\Pr\Bk{\mathcal{E}}}. 
    \end{align*}
    
    Below, we further simplify $\E\Bk*{\abs*{\bk*{\accept{F_r}\setminus \accept{G_\l}} \cap U_k}}$ in terms of $a_k$'s.
    This is achieved by the following two claims. At a high level, the first claim says that, when evaluating $\E\Bk*{\abs*{\bk*{\accept{F_r}\setminus \accept{G_\l}} \cap U_k}}$, we can, to a first approximation, treat $U_k$ as independent of  $\accept{F_r}\setminus \accept{G_\l}$. The second claim then says that, thanks to the obfuscation performed within the obfuscation tree, the expectation of $|\accept{F_r}|$ is guaranteed to be almost the same as that of $|\accept{G_\l}|$. Once we have these claims in place, we will be able to complete the proof using essentially the same framework as in \Cref{prop:warmup}.

    \begin{claim}
        \label{clm:removing_Uk}
        For any $\l, r,k$ with $0\le \l < k \le r \le b$, we have
        \[
            \E\Bk*{\abs*{\bk*{\accept{F_r}\setminus \accept{G_\l}} \cap U_k}} 
            \le \frac{|U|}{b |U| - n M^{b+1}} \E\Bk*{\abs*{{\accept{F_r}\setminus \accept{G_\l}}}} + \frac{nM^{b+1}}{b}.
        \]
    \end{claim}
    \begin{proof}
        We analyze the expectation by conditioning on the randomness of the obfuscation sequence $\sigma$ and the random tape $\tapeD$ used by the algorithm $\calD$. 
        
        Fix any possible choices $\sigma^*$ and $\tapeD^*$. For the rest of the proof of the claim, we condition on $\sigma = \sigma^*$ and $\tapeD = \tapeD^*$. This means that the filter states $F_r$ and $G_\l$ are completely fixed, and so is the set $\accept{F_r} \setminus \accept{G_\l}$. Let $A^*$ denote the set $\accept{F_r} \setminus \accept{G_\l}$.

        On the other hand, the distribution of the set $U_k$ under this condition can be described as follows. First, conditioning on $\tapeD^*$ does not affect the distribution of $U_k$, because $U_k$ and $\sigma$ are determined by another public random tape, which is independent of $\tapeD$. Furthermore, conditioning on $\sigma^*$ slightly alters the distribution of $U_k$ in the following way. Each key that appears in $\sigma^*$ (i.e., in one of the edge labels of the obfuscating tree) has its membership in $U_k$ determined by $\sigma^*$, based on the depth of the corresponding edge in the obfuscating tree. Suppose there are $L$ such keys in total, of which $L_k$ belong to $U_k$. Then, outside these $L$ keys, $U_k$ behaves as a uniformly random subset of size $|U|/b - L_k$. In particular, for each key that is not present in $\sigma^*$, it belongs to $U_k$ with probability exactly $\frac{|U|/b - L_k}{|U| - L}$.

        Based on these observations, the expectation of $\abs*{\bk*{\accept{F_r}\setminus \accept{G_\l}} \cap U_k}$ can be upper bounded as follows:
        \begin{align*}
            &\E\Bk*{\abs*{\bk*{\accept{F_r}\setminus \accept{G_\l}} \cap U_k} \mid \sigma^*, \tapeD^*} 
            = \E\Bk*{\abs*{A^*\cap U_k} \mid \sigma^*, \tapeD^*} \\
            {}={}& \sum_{y \in A^*} \Pr\Bk*{y \in U_k \mid \sigma^*, \tapeD^*}
            {}\le{} L_k + \frac{|U|/b - L_k}{|U| - L} \abs*{A^*}. \numberthis \label{ineq:expectation_of_intersection_conditioned}
        \end{align*}
        Moreover, as the obfuscating tree has at most $M^b + M^{b-1} + \cdots + 1 < M^{b+1} / 2$ edges, we have $L_k \le L < n M^{b+1}/b$. Plugging this into \eqref{ineq:expectation_of_intersection_conditioned}, we further obtain
        \begin{align*}
            \E\Bk*{\abs*{\bk*{\accept{F_r}\setminus \accept{G_\l}} \cap U_k} \mid \sigma^*, \tapeD^*} 
            {}\le{} \frac{n M^{b+1}}{b} + \frac{|U|/b }{|U| - n M^{b+1}/b} \E\Bk*{\abs*{{\accept{F_r}\setminus \accept{G_\l}}} \mid \sigma^*, \tapeD^*}.
        \end{align*}
        Then, by taking expectation over $\sigma^*$ and $\tapeD^*$ on both sides, we obtain the desired inequality.
    \end{proof}

    \begin{claim}
        \label{clm:coupling}
        For any $k \le b$, we have
        \begin{align*}
            \E\Bk*{\abs*{\accept{F_k}}} \le \E\Bk*{\abs*{\accept{G_k}}} + \frac{(1 - \eps) n + \eps |U|}{M}.
        \end{align*}
    \end{claim}
    \begin{proof}
        Let $C_k$ be the number of children of the node $v_k$ in the obfuscating tree. The key point of the proof is to show that, for each integer $i$ with $1 \le i \le M-1$, we have 
        \begin{align*}
            \E\Bk*{\abs*{\accept{F_k}} \mid C_k = i} = \E\Bk*{\abs*{\accept{G_k}} \mid C_k = i + 1}. \numberthis \label{eq:coupling}
        \end{align*}
        \cref{eq:coupling} follows from the key observation that the distribution of $\bk*{\sigma_{F_k} \mid C_k = i}$ matches that of $\bk*{\sigma_{G_k} \mid C_k = i+1}$. 
        
        To establish this, recall that $\sigma_{F_k}$ is derived from the DFS traversal of the obfuscating tree $T$, which terminates after fully exploring the subtree rooted at $v_k$. Meanwhile, $\sigma_{G_k}$ is defined similarly, but its traversal halts after exploring all the subtrees rooted at $v_k$'s children except for the last one, which is rooted at $v_{k+1}$. Consequently, both $\bk*{\sigma_{F_k} \mid C_k = i}$ and $\bk*{\sigma_{G_k} \mid C_k = i+1}$ explore the first $i$ children of $v_k$, ensuring their corresponding traversals follow the same distribution.
        
        Moreover, note that the edge labels in $T$ are sampled independently, and every edge at the same level $j$ is drawn from the same distribution---whether it is the rightmost edge (labeled by $X_j$) or a non-rightmost edge (labeled by an independent sequence of $n/b$ distinct keys from $U_j$). It follows that $\bk*{\sigma_{F_k} \mid C_k = i}$ and $\bk*{\sigma_{G_k} \mid C_k = i+1}$ are identically distributed.

        Based on \eqref{eq:coupling}, we can conclude the desired inequality by 
        \begin{align*}
            \E\Bk*{\abs*{\accept{F_k}}}
            {}={}& \sum_{i = 1}^M \frac{1}{M} \E\Bk*{\abs*{\accept{F_k}} \mid C_k = i} \\
            {}={}& \sum_{i = 1}^{M-1} \frac{1}{M} \E\Bk*{\abs*{\accept{G_k}} \mid C_k = i + 1} + \frac{1}{M}\E\Bk*{\abs*{\accept{F_k}} \mid C_k = M}\\
            {}\le{}& \sum_{i = 1}^{M} \frac{1}{M} \E\Bk*{\abs*{\accept{G_k}} \mid C_k = i} + \frac{1}{M}\E\Bk*{\abs*{\accept{F_k}} \mid C_k = M}\\
            {}\le{}& \E\Bk*{\abs*{\accept{G_k}}} + \frac{(1 - \eps) n + \eps |U|}{M},
        \end{align*}
        where the last inequality uses that $\E_{\tapeD}\Bk*{\abs*{\accept{F_k}}} \le  (1 - \eps) n + \eps |U|$ since the filter algorithm $\calD$ only allows a positive error rate $\eps$.
    \end{proof}

    Combining \cref{clm:removing_Uk} and \cref{clm:coupling}, $\E\Bk*{\abs*{\bk*{\accept{F_r}\setminus \accept{G_\l}} \cap U_k}}$ can be further simplified as 
    \begin{align*}
        &\E\Bk*{\abs*{\bk*{\accept{F_r}\setminus \accept{G_\l}} \cap U_k}}  \\
        {}\le{}& \frac{|U|}{b |U| - n M^{b+1}} \E\Bk*{\abs*{{\accept{F_r}\setminus \accept{G_\l}}}} + \frac{nM^{b+1}}{b} \tag{by \cref{clm:removing_Uk}}\\
        {}={}& \frac{|U|}{b |U| - n M^{b+1}} \bk*{\E\Bk*{\abs*{\accept{F_r}}}- \E\Bk*{\abs*{\accept{G_\l}}}} + \frac{nM^{b+1}}{b}  \tag{by monotonicity}\\
        {}\le{}& \frac{|U|}{b |U| - n M^{b+1}} \bk*{\E\Bk*{\abs*{\accept{G_r}}}- \E\Bk*{\abs*{\accept{G_\l}}} + \frac{(1 - \eps) n + \eps |U|}{M}} + \frac{nM^{b+1}}{b} \tag{by \cref{clm:coupling}}\\
        {}={}& \frac{|U|}{b |U| - n M^{b+1}} \bk*{(1 - \eps) n + \eps |U|} \bk*{a_{(\l, r]} + \frac{1}{M}} + \frac{nM^{b+1}}{b}\\
        {}={}& \frac{\eps|U|}{b} \cdot \bk{1 + o(1)} \cdot \bk*{a_{(\l, r]} + \frac{1}{M}} + \frac{nM^{b+1}}{b},
        \numberthis \label{ineq:intersection_size_in_a_k}
    \end{align*}
    where the last line uses the following two approximations:
    \begin{itemize}
        \item Since $|U| = \omega(n\eps^{-1})$, we have $(1 - \eps)n + \eps |U| = \eps|U| \cdot \bk{1 + o(1)}$.
        \item Since $\eps|U|/n = \omega\bk*{9^{b^2}}$ and $M = 4^b$, we have $|U|/n = \omega\bk*{9^{b^2}} = \omega\bk*{M^{b+1}/b}$, which means $b |U| - n M^{b+1} = b|U| \cdot ( 1 - o(1))$.
    \end{itemize}
    Moreover, we have $\eps|U|/n = \omega\bk*{9^{b^2}} = \omega\bk*{4^b \cdot M^{b+1}}$, which means $n M^{b+1} < \eps |U| /4^b$. Hence, \eqref{ineq:intersection_size_in_a_k} can be further simplified as 
    \begin{align*}
         \frac{\eps|U|}{b} \cdot \bk{1 + o(1)} \cdot \bk*{a_{(\l, r]} + \frac{1}{M}} + \frac{nM^{b+1}}{b}
        {}\le {}& \frac{\eps|U|}{b} \cdot \bk{1 + o(1)} \cdot \bk*{a_{(\l, r]} + \frac{2}{4^b}}.
        \numberthis \label{ineq:intersection_size_in_a_k_simplified}
    \end{align*}
    Plugging \eqref{ineq:intersection_size_in_a_k_simplified} into \eqref{ineq:entropy_single_key_dependent}, the message entropy of sending a single key is further upper bounded by
    \begin{align*}
         & h(\eps) + \eps \log \bk{|U|/b} + \bk{1 - \eps} \log \frac{\E\Bk*{\abs*{\bk*{\accept{F_r}\setminus \accept{G_\l}} \cap U_k}}}{ \Pr\Bk*{x \notin \accept{G_\l}}} \\
        {}\le{} & o(1) + \eps \log \bk{|U|/b} + \bk{1 - \eps} \bk*{\log \bk*{\frac{\eps|U|}{b} \cdot \bk{1 + o(1)} \cdot \bk*{a_{(\l, r]} + \frac{2}{4^b}}} - \log \bk*{1 - \eps }} \\
        {}={} & \log \bk{|U|/b} + (1-\eps) \log \eps + \bk{1 - \eps} {\log { \bk*{a_{(\l, r]} + \frac{2}{4^b}}}} + o(1)\\
        {}={} & \log \bk{|U|/b} +  \log \eps + \bk{1 - \eps} {\log { \bk*{a_{(\l, r]} + \frac{2}{4^b}}}} + o(1).
    \end{align*}
    Hence, by summing up the communication cost for each key $x$ in $X_k$, we obtain the desired upper bound on the total message entropy incurred by $\send{X_k, F_r, G_\l}$:
    \begin{align*}
         &\frac{n}{b} \cdot \log \bk{|U|/b} +  \frac{n}{b} \log \eps + \bk{1 - \eps} \frac{n}{b} \cdot {\log { \bk*{a_{(\l, r]} + \frac{2}{4^b}}}} + o\bk*{\frac{n}{b}}. \qedhere
    \end{align*}
\end{proof}

\paragraph{Putting the pieces together.}
Now, to establish a lower bound for $\hfilter$, we analyze the total communication cost incurred by the protocol $Q_s$. Similar to \cref{prop:warmup}, the communication cost of $Q_s$ consists of the following two parts:
\begin{itemize}
    \item First, Alice transmits the filter state $F_b$, which requires $\hfilter$ bits.  
    \item Then, for each $k \leq b$, Alice sends $X_k$ using $\send{X_k, F_{r_k}, G_{\l_k}}$, where the parameters $\l_k$ and $r_k$ are defined as follows:  
    \begin{align*}
        \l_k \defeq  \begin{cases}
        k - 1 & \textup{for }k:  k \le s\\
        s & \textup{for }k: s+ 1 \le k \le b
    \end{cases} \quad \text{and} \quad
    r_k \defeq \begin{cases}
        b & \textup{for }k: k \le s \\
        k& \textup{for }k: s+1 \le k \le b
    \end{cases}.
    \end{align*}
\end{itemize}
We now choose parameters $M$ and $b$ such that $M = 4^b$, $b = \omega(1)$, and $9^{b^2} = o(\eps |U|/n)$. With this choice, the communication cost incurred by each $\send{X_k, F_{r_k}, G_{\l_k}}$ can be upper bounded using \cref{lem:cost_for_single_X_k_dependent}. Summing over all terms, the total message entropy of $Q_s$ is at most  
\begin{align*}
    \hfilter + n\log(|U|/b) + n\log \eps + (1 - \eps)\frac{n}{b} \cdot  \bk*{\sum_{k=1}^b \log \bk*{a_{(\l_k, r_k]} + \frac{2}{4^b}}} + o(n).
\end{align*}
Comparing to \cref{clm:message_entropy_lb_dependent}, we obtain
\begin{align*}
    \hfilter + n\log(|U|/b) + n\log \eps + (1 - \eps)\frac{n}{b} \cdot \bk*{\sum_{k=1}^b \log \bk*{a_{(\l_k, r_k]} + \frac{2}{4^b}}} + o(n)
    \ge b \log \bk*{(|U|/b)^{\underline{n/b}}}. \numberthis \label{ineq:filter_lb_dependent_without_simplifying}
\end{align*}
Moreover, as $|U| = \omega\bk{n}$, we have
\begin{align*}
    \bk*{\frac{|U|}{b}}^{\underline{n/b}}
    \ge \bk*{\frac{|U|}{b} - \frac{n}{b}}^{{n/b}}
    = \bk*{\frac{|U|}{b}}^{{n/b}} \bk*{1 - \frac{n}{|U|}}^{n/b}
    \ge \bk*{\frac{|U|}{b}}^{{n/b}} \bk*{1 - o\bk*{\frac{n}{b}}}.
\end{align*}
Therefore, the right-hand side of \eqref{ineq:filter_lb_dependent_without_simplifying} can be further simplified as
\begin{align*}
    b \log \bk*{\bk*{{|U|}/{b}}^{\underline{n/b}}}
    {}\ge{}& b \bk*{\log \bk*{\bk*{{|U|}/{b}}^{{n/b}}\bk*{1 - o\bk*{{n}/{b}}}}} \ge n \log ({|U|}/{b}) - o(n),
\end{align*}
and we obtain
\begin{align*}
    \hfilter \ge n\log \eps^{-1} - (1 - \eps)\frac{n}{b} \cdot \bk*{\sum_{k=1}^b \log \bk*{a_{(\l_k, r_k]} + \frac{2}{4^b}}} - o(n). \numberthis \label{ineq:filter_lb_dependent_simplified}
\end{align*}

Finally, we select an appropriate $s$, following a similar approach as in \cref{prop:warmup}, to derive the desired filter lower bound from \eqref{ineq:filter_lb_dependent_simplified}. By \cref{lem:choice_of_s}, there exists an $s$ with $0 \leq s \leq b$ such that
\begin{align*}
    \sum_{k=1}^b \log a_{(\l_k, r_k]} \le -b\log e + o(b). \numberthis \label{ineq:choice_of_s_dependent}
\end{align*}
Below, we show that such $s$ suffices to imply the desired filter lower bound.

\begin{lemma}
    \label{lem:removing_delta_in_log}
    For any real numbers $z_1, \ldots, z_b \in (0,1]$, suppose that 
    \[\sum_{k=1}^b \log z_k \le -b\log e + o(b).\numberthis \label{ineq:sum_of_log}\]
    Then, we have 
    \begin{align*}
        \sum_{k=1}^b \log \bk*{z_k + \frac{2}{4^b}} \le -b \log e + o(b). \numberthis \label{ineq:sum_of_log_with_delta}
    \end{align*}
\end{lemma}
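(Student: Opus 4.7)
The plan is to split the indices by the size of $z_k$ relative to a carefully chosen threshold $c = 2^{-\alpha b}$, where $\alpha$ is a fixed constant with $\log e < \alpha < 2$ (for concreteness, take $\alpha = 3/2$). Let $S = \{k : z_k < c\}$ and $s = |S|$. For $k \notin S$, the additive perturbation $2/4^b$ is negligible compared to $z_k$, so applying $\log(1+x) \le x/\ln 2$ gives
\[
\log\bk*{z_k + \tfrac{2}{4^b}} - \log z_k \;=\; \log\bk*{1 + \tfrac{2}{z_k\,4^b}} \;\le\; \frac{2}{z_k\,4^b\,\ln 2} \;\le\; \frac{2\cdot 2^{(\alpha-2)b}}{\ln 2},
\]
and summing over the at most $b$ large indices yields total excess $b\cdot O(2^{(\alpha-2)b}) = o(1)$ since $\alpha < 2$. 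For $k \in S$, the monotonicity of $\log$ combined with $c \ge 2/4^b$ (which holds once $b$ is large enough, since $\alpha < 2$) gives the pointwise bound $\log\bk*{z_k + \tfrac{2}{4^b}} \le \log(c + 2/4^b) \le \log(2c) = 1 - \alpha b$.

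Combining these bounds, I obtain
\[
\sum_{k=1}^b \log\bk*{z_k + \tfrac{2}{4^b}} \;\le\; \sum_{k \notin S} \log z_k \;+\; s\,(1 - \alpha b) \;+\; o(1).
\]
I then split into two cases. If $s = 0$, the right-hand side simplifies to $\sum_k \log z_k + o(1)$, which is at most $-b\log e + o(b)$ directly by the hypothesis \eqref{ineq:sum_of_log}. If $s \ge 1$, I discard the non-positive sum $\sum_{k \notin S}\log z_k \le 0$ and use that $1 - \alpha b \le 0$ for $b$ large, getting
\[
\sum_{k=1}^b \log\bk*{z_k + \tfrac{2}{4^b}} \;\le\; s\,(1 - \alpha b) + o(1) \;\le\; 1 - \alpha b + o(1),
\]
which is at most $-b\log e + o(b)$ because $\alpha > \log e$.

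The main subtlety is in the choice of $\alpha$. The upper constraint $\alpha < 2$ is needed so that the multiplicative perturbation $(2/4^b)/z_k$ stays uniformly $o(1)$ on large-$z_k$ indices; without it, the large-$z_k$ excess could be as much as $\Theta(b)$ (e.g., with threshold $c = 2/4^b$, each large index could contribute up to $\log 3$). The lower constraint $\alpha > \log e$ is needed because the hypothesis provides no usable control on how negative $\log z_k$ can be: a single extremely small $z_k$ can satisfy \eqref{ineq:sum_of_log} while making any additive bookkeeping of $-\log z_k$ arbitrarily large. Choosing $\alpha$ strictly between $\log e$ and $2$ assigns each small-$z_k$ index a worst-case cost $1 - \alpha b$ that already strictly dominates $-b\log e$, so even the case $s = 1$ is handled without needing a finer-grained use of the hypothesis on the small indices.
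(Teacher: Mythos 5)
Your proof is correct and follows essentially the same approach as the paper's: both arguments split on whether some $z_k$ is exponentially small (you use threshold $2^{-3b/2}$, the paper uses $3^{-b}$, both lying strictly between $e^{-b}$ and $4^{-b}$), handle that case by a pointwise $-\Omega(b)$ bound on a single small term with all others contributing $o(1)$, and handle the all-large case by showing $\log\bigl(z_k + 2/4^b\bigr) \le \log z_k + o(1)$ term-by-term before invoking the hypothesis. The only cosmetic difference is that you parameterize the threshold and carry $\sum_{k\notin S}\log z_k$ through both cases, whereas the paper hard-codes the threshold $3^{-b}$ and simply drops the other terms to $\log(1+2/4^b)=o(1)$ in the small-$z_k$ case.
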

\begin{proof}
    Depending on how large each $z_k$ is, there are two cases.

    \paragraph{Case 1.} Suppose there exists a $k^* \le b$, such that $z_{k^*} \le 3^{-b}$. Then, 
    \begin{align*}
        \log \bk*{z_{k^*} + \frac{2}{4^b}} < \log \frac{3}{3^b} < \log \frac{1}{e^b} = - b\log e.
    \end{align*}
    For any other $k \neq k^*$, we have 
    \begin{align*}
        \log \bk*{z_{k} + \frac{2}{4^b}}
        \le \log \bk*{1 + \frac{2}{4^b}} = o(1).
    \end{align*}
    Summing them together, we obtain \eqref{ineq:sum_of_log_with_delta}, as desired.

    \paragraph{Case 2.} Suppose for any $k \le b$, we have $z_k > 3^{-b}$. Then,
    \begin{align*}
        \log \bk*{z_{k} + \frac{2}{4^b}}
        = \log z_k + \log \bk*{1 + \frac{2}{4^b \cdot z_k}}
        < \log z_k + \log \bk*{1 + \frac{2}{(4/3)^b}}
        = \log z_k + o\bk*{1}.
    \end{align*}
    Summing over all $k$'s, we obtain 
    \begin{align*}
        \sum_{k=1}^b \log \bk*{z_k + \frac{2}{4^b}} 
        \le \sum_{k=1}^b \bk*{\log z_k + o(1)} 
        \le -b \log e + o(b),
    \end{align*}
    where the last inequality follows from \eqref{ineq:sum_of_log}.
\end{proof}

Combining \cref{lem:removing_delta_in_log} and \eqref{ineq:choice_of_s_dependent}, we get
\begin{align*}
    \sum_{k=1}^b \log \bk*{a_{(\l_k, r_k]} + \frac{2}{4^b}} \le -b\log e + o(b).
\end{align*}
Plugging this inequality into \eqref{ineq:filter_lb_dependent_simplified}, we obtain
\begin{align*}
    \hfilter 
    {}\ge{}& n\log \eps^{-1} - (1 - \eps) \cdot \frac{n}{b} \cdot \bk*{- b\log e + o(b)} - o(n) \\
    {}={}& n\log \eps^{-1} + n\log e - o(n).
\end{align*}
Finally, we verify that the lower-bound construction can be implemented using $f(n)$ operations, for any $f(n) = \omega(n)$. Since the obfuscation sequence $\sigma$ contains at most $n M^{b+1}/b$ operations, the protocol $Q_s$ requires at most $n M^{b+1}/b$ insertions and deletions to maintain the filters $F$ and $G$. By choosing $b = \omega(1)$ to grow sufficiently slowly, we can ensure that the number of operations is at most $f(n)$. This completes the proof of \cref{prop:history_dependent}.

\section{Beyond Monotonicity}
\label{sec:non-monotone}

In the previous sections, we proved \cref{thm:filter_lb} under the monotonicity assumption. In this section, we extend the proof to work for general filters without assuming the accepted set is monotone. 

The high-level idea of the proof is to define a modified version of the accepted set (called the \defn{reconstructible set}) for a general filter---one that remains monotone while preserving all the necessary properties of an accepted set. Specifically, let $\acceptvar{F}$ denote the reconstructible set of a filter $F$. Then, $\acceptvar{F}$ needs to satisfy the following properties:
\begin{itemize}
    \item Every key in the true key set of $F$ must also be in the reconstructible set $\acceptvar{F}$.  
    \item Any key not in the true key set of $F$ is included in $\acceptvar{F}$ with probability at most $\eps$, taken over the randomness of $\calD$.
    \item  The reconstructible set is monotone in the sense that, for any $\l \le r$, it satisfies $\acceptvar{G_\l} \subseteq \acceptvar{G_r}$ and $\acceptvar{G_\l} \subseteq \acceptvar{F_r}$, where $F_k$'s and $G_k$'s are the filters defined in \cref{sec:history_dependent}.  
\end{itemize}
One can verify that the proof of \cref{prop:history_dependent} remains valid when all instances of the accepted set are replaced with the reconstructible set, as long as the listed properties hold. Thus, it suffices to construct a reconstructible set that satisfies these properties. 

The definition of the reconstructible set depends on the partition $\pi$ of the universe. Fix a partition $\pi$. Throughout this section, we consider only operational sequences that \emph{conform} to $\pi$, defined as follows. 
\begin{definition}
    We say an operational sequence $\sigma$ \defn{conforms} to a partition $\pi$ if it is self-contained and satisfies the following property. If we execute $\sigma$ on an initially empty filter, then: 
    \begin{itemize}
        \item Whenever the size of the true key set is in the range $[(k-1)n/b, kn/b)$, $\sigma$ inserts only keys from $U_k$.
        \item Whenever the size of the true key set is in the range $((k-1)n/b, kn/b]$, $\sigma$ deletes only keys from $U_k$. 
    \end{itemize}
\end{definition}

Clearly, the obfuscation sequence defined in \cref{sec:history_dependent} conforms to the partition $\pi$. Below, we define the reconstructible set for any filter state $F$ that arises from executing an operational sequence conforming to $\pi$ on the initial empty filter.

\begin{definition}[Reconstructible set]
    The \defn{reconstructible set} $\acceptvar{F}$ of a filter $F$ is defined as the set of all keys $x$ satisfying the following property: There exists a filter state $F'$ that arises from executing an operational sequence $\sigma'$ conforming to $\pi$ on the initial empty filter, such that
    \begin{itemize}
        \item $F'$ has the same memory representation as $F$.
        \item $F'$ has the same true key set size as $F$.
        \item $x$ belongs to the true key set of $F'$.
    \end{itemize}
    
\end{definition}

For a given key $x$, any operational sequence $\sigma'$ with the above properties is said to be a \defn{reconstruction sequence}. The keys in the reconstructible set are precisely those for which at least one reconstruction sequence exists.

To conclude the proof of \cref{thm:filter_lb}, we check this reconstructible set has the desired properties in the following lemma.

\begin{lemma}
    For any filter states $F$ and $G$ that arise from executing operational sequences $\sigma_F$ and $\sigma_G$ conforming to $\pi$ on the initial empty filter, respectively, the following holds:
    \begin{itemize}
        \item The true key set of $F$ is contained in $\acceptvar{F}$, and $\acceptvar{F}$ is contained in $\accept{F}$.
        \item Suppose $\sigma_G$ is a prefix of $\sigma_F$ and their difference is self-contained. If the true key set size of $G$ is a multiple of $n/b$, say $\l n/b$, then we have $\acceptvar{G} \subseteq \acceptvar{F}$.
    \end{itemize} 
\end{lemma}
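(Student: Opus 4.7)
The plan is to handle the three sub-claims separately, all via direct constructions of reconstruction sequences. The inclusion (true key set of $F$) $\subseteq \acceptvar{F}$ is trivial: for any $x$ in the true key set of $F$, take $\sigma' = \sigma_F$ and $F' = F$ as the witness. The reverse inclusion $\acceptvar{F} \subseteq \accept{F}$ follows from the deterministic correctness of queries on true positives. Concretely, if $x \in \acceptvar{F}$ is witnessed by a state $F'$ whose memory matches $F$ and whose true key set contains $x$, then $\texttt{Query}(x)$ on $F'$ returns \texttt{true} deterministically; since every filter in this section is implemented by $\calD$ with the same random tape, identical memory representations force identical query outputs, so $\texttt{Query}(x)$ on $F$ also returns \texttt{true}, giving $x \in \accept{F}$.

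For the monotonicity statement $\acceptvar{G} \subseteq \acceptvar{F}$, suppose $x \in \acceptvar{G}$ is witnessed by a reconstruction sequence $\sigma_G'$ producing a filter $F_G'$ whose memory matches $G$, whose true key set $S_G'$ has size $\ell n/b$, and which contains $x$. Let $\tau$ denote the suffix for which $\sigma_F = \sigma_G \cdot \tau$. My candidate reconstruction sequence for $x$ with respect to $F$ is $\sigma_F' \defeq \sigma_G' \cdot \tau$. Two of the three required witnessing properties are essentially automatic: since $\sigma_G'$ and $\sigma_G$ produce the same memory state, and the memory evolves deterministically under $\tau$ given the shared random tape, the resulting filter $F_F'$ has the same memory representation as $F$; moreover, $\tau$'s net effect on true key set size is determined entirely by its sequence of insertions and deletions, so $F_F'$ has the same true key set size as $F$. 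What remains is to show that $\sigma_F'$ is a valid operational sequence conforming to $\pi$ and that $x$ remains in the true key set throughout its execution.

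The self-containment of $\tau$ as a standalone sequence drives the argument. Every deletion in $\tau$ undoes an earlier insertion in $\tau$, so when $\tau$ is applied after any filter state, the pre-existing keys are never touched. Consequently $S_G'$ (and in particular $x$) persists throughout the execution of $\tau$ after $\sigma_G'$, and the size trajectory during $\tau$ is identical whether $\tau$ is appended after $\sigma_G$ or after $\sigma_G'$ (both start at $\ell n/b$, and $\tau$'s own accumulated subset grows and shrinks in exactly the same way). The conforming rule assigns every operation in $\tau$ to the class $U_k$ determined by the current size alone, so $\sigma_G' \cdot \tau$ inherits the conforming $U_k$ assignment from $\sigma_G \cdot \tau = \sigma_F$.

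The main obstacle is to rule out duplicate insertions in $\sigma_G' \cdot \tau$---that is, to show that no key inserted by $\tau$ already lies in $S_G'$. The crucial observation, which I would establish by a direct counting argument, is that any conforming sequence ending at a size $\ell n/b$ that is a multiple of $n/b$ has a true key set consisting of exactly $n/b$ keys from each of $U_1, \ldots, U_\ell$ and nothing else. Indeed, the conforming rule forces the count of $U_k$ keys at any current size $c$ to equal the net number of up-steps by the size trajectory across boundaries inside $[(k-1)n/b, kn/b]$, which simplifies to $\min(\max(c - (k-1)n/b, 0), n/b)$ and depends only on $c$, not on the specific operations. Applying this to $\sigma_G'$ gives $S_G' \subseteq U_1 \cup \cdots \cup U_\ell$. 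On the other hand, because $\tau$ is self-contained from empty, its size trajectory (when viewed as starting from an empty filter) is always nonnegative, so when $\tau$ is appended after a state of size $\ell n/b$ the running size never drops below $\ell n/b$; every insertion by $\tau$ therefore occurs at a size in $[\ell n/b, \infty)$ and so inserts a key from some $U_k$ with $k \ge \ell + 1$. These classes are disjoint from $S_G'$, so no duplicates arise, and $\sigma_F' = \sigma_G' \cdot \tau$ is a valid conforming reconstruction sequence witnessing $x \in \acceptvar{F}$.
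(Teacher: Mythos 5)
Your proof is correct and follows essentially the same route as the paper: concatenating the reconstruction sequence $\sigma_G'$ with the suffix $\tau = \sigma_F - \sigma_G$, and arguing self-containment by showing that insertions in $\tau$ come from $U_k$ with $k \geq \ell + 1$, disjoint from the true key set of any conforming state at size $\ell n/b$. If anything you are somewhat more thorough than the paper, in that you explicitly verify all three witnessing properties (memory agreement, size agreement, and persistence of $x$) and spell out the counting argument behind the claim that conforming sequences at size $\ell n/b$ have keys only from $U_1 \cup \cdots \cup U_\ell$, both of which the paper leaves implicit.
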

\begin{proof}
    First, it is easy to verify that the true key set of $F$ is contained in $\acceptvar{F}$. This is because the sequence $\sigma_F$ serves as the reconstruction sequence for each key in the true key set of $F$, ensuring that all such keys are included in $\acceptvar{F}$.

    Second, for any key $x \in \acceptvar{F}$, we show that $x$ is also in $\accept{F}$. By the definition of the reconstructible set, there exists a sequence $\sigma'$ conforming to $\pi$, such that the filter $F'$ resulting from the operational history $\sigma'$ has the same memory representation as $F$, and $x$ belongs to the true key set of $F'$. Since the true key set of $F'$ is contained in the accepted set $\accept{F'}$, we have $x \in \accept{F'}$. Moreover, since $F'$ shares the same memory representation with $F$, the query algorithm over $F'$ behaves exactly the same as over $F$, implying $\accept{F'} = \accept{F}$. Therefore, combining these two results, we conclude $x \in \accept{F}$.

    Finally, we show the reconstructible set is monotone. For any $x \in \acceptvar{G}$, we show that $x$ is also in $\acceptvar{F}$. Let $\sigma'_G$ be the reconstruction sequence of $x$ in $G$. Then, we claim that the operational sequence $\sigma_F'$, defined by concatenating $\sigma_G'$ with the difference $\sigma_F - \sigma_G$, is a valid reconstruction sequence of $x$ in $F$.

    We first verify that $\sigma_F'$ is self-contained.  Let $G'$ be the filter state resulting from the operational history $\sigma_G'$. Since both $\sigma_G'$ and $\sigma_F - \sigma_G$ are self-contained, the only way $\sigma_F'$ could fail to be self-contained is if there exists a key $x$ that is in the true set for $G'$, but is then inserted again in $\sigma_F - \sigma_G$ without first being deleted. Let us assume that this happens, and derive a contradiction.

    By the properties of the reconstruction sequence, the true key set size of $G'$ matches that of $G$, which is $\l n/b$. Since $\sigma_G'$ conforms to $\pi$, all keys in the true key set of $G'$ must belong to $\bigcup_{k=1}^\l U_k$, meaning $x \in \bigcup_{k=1}^\l U_k$. 

    On the other hand, because $\sigma_F - \sigma_G$ is self-contained, all insertions in $\sigma_F - \sigma_G$ occur when the true key set size is at least $\l n/b$. By the definition of conformity to $\pi$, these insertions must be from $\bigcup_{k=\l+1}^b U_k$. This contradicts the assumption that $x$ is inserted again in $\sigma_F - \sigma_G$, and completes the proof that $\sigma_F'$ is self-contained.

    Now, since both $\sigma_G'$ and $\sigma_F - \sigma_G$ conform to $\pi$, it is easy to further check that $\sigma_F'$ also conforms to $\pi$. Therefore, $\sigma_F'$ is a valid reconstruction sequence of $x$ in $F$, meaning that $x \in \acceptvar{F}$, as desired.
\end{proof}

\section{Open Questions}

We conclude with two directions for future work.

\paragraph{Tight upper and lower bounds for $\epsilon^{-1} = \Theta(1)$.} The lower bounds in this paper establish that, for $\epsilon = o(1)$, fingerprint filters achieve redundancy within $o(n)$ bits of optimal for any dynamic filter. 
    
When $\epsilon^{-1} = \Theta(1)$, we conjecture that fingerprint filters should continue to be optimal, but with the following caveat. When $\epsilon^{-1} = \Theta(1)$, the fingerprint filter can no longer ignore the fact that the fingerprints it is storing form a \emph{multiset} rather than a set. Any space-optimal implementation would have to encode this multiset in an information-theoretically optimal way, based on the distribution that the multiset comes from. Even from an upper-bound perspective, constructing such a fingerprint filter---while supporting time-efficient operations---remains open. 

Equally interesting is to prove strong lower bounds for the case of $\epsilon^{-1} = \Theta(1)$. This appears to be considerably more difficult than the case of $\epsilon^{-1} = \omega(1)$. It does not appear, for example, that the current proofs can, with more careful bookkeeping, be extended to get tight bounds in this regime. 

\paragraph{Tight lower bounds for value-dynamic retrieval. }Closely related to the problem of dynamic filters is the problem of \emph{value-dynamic retrieval} \cite{kuszmaul2024space}: Here, the task is to encode a function $f:S \rightarrow [2^v]$ for a set $S \subseteq U$ of $n$ keys, while supporting queries of the form ``return $f(s)$'' and updates of the form ``update $f(s) := x$'', where in both cases the user is responsible for guaranteeing that $s \in S$. 

The classic way to solve value-dynamic retrieval is with a minimal perfect hash function \cite{fredman1984size,hagerup2001efficient}, which necessarily incurs $n \log e - o(n)$ bits of redundancy. Kuszmaul and Walzer \cite{kuszmaul2024space} prove that $\Omega(n)$ bits of redundancy are necessary. They also show that, when $v = O(1)$, it is possible to construct an upper bound that achieves redundancy $n \log e - \Omega(n)$ bits. However, for the common case of $v = \omega(1)$, they conjecture that minimal perfect hashing should be optimal---that is, that any solution must incur $n \log e - o(n)$ bits of redundancy. 

In \cite{kuszmaul2024space}, the high-level approach that they take in their lower bounds is essentially the same for dynamic filters and for value-dynamic retrieval. We remark, however, that this connection appears to disintegrate as one tries to prove sharp bounds for the redundancy. The techniques in the current paper do \emph{not} appear to meaningfully extend to the value-dynamic retrieval problem. This is in part due to the following challenges:
\begin{itemize}
    \item \textbf{No ``empty states'':} In our lower bound for filters, we rely implicitly on the ability to send \emph{two} filters from Alice to Bob, where one of them is the empty filter, which can therefore be sent for free. For value-dynamic retrieval, there is no analogue to an empty filter---if Alice starts at data structure $A$ and performs updates to get to data structure $B$, then Bob cannot recover either $A$ or $B$ for free (unless Bob happens to already know $S$).
    \item \textbf{Avoiding inoperable states:} In value-dynamic retrieval, if the user ever performs an illegal update (updates $f(u)$ for some $u \notin S$), then the resulting data structure need not offer any guarantees of any sort (for either queries or updates). This significantly limits the types of operation sequences that Alice can use in constructing her communication protocol, and seems to prevent Alice from being able to rely on key-set changes in order to communicate information.
\end{itemize}
Based on these difficulties, we suspect that an entirely different approach will be needed if one is to prove Kuszmaul and Walzer's conjecture \cite{kuszmaul2024space}. Such a lower bound would be of great interest.

\bibliographystyle{alpha}
\bibliography{reference.bib}

\end{document}